\newcommand{\p}{\partial}
\newcommand{\ve}{\varepsilon}
\newcommand{\FF}{\mathcal{F}}
\newcommand{\GG}{\mathcal{G}}
\newcommand{\PP}{\mathcal{P}}
\newcommand{\DDD}{\mathcal{D}}
\newtheorem{theorem}{Theorem}
\newtheorem{corollary}{Corollary}
\newtheorem{proposition}{Proposition}
{\theoremstyle{definition}
\newtheorem{definition}{Definition}

\newtheorem{remark}{Remark}
\newtheorem*{remark*}{Remark}
}
\newcommand{\todo}[1][\null]{\ensuremath{\clubsuit}}
\newcommand{\noprint}[1]{}
\newcommand{\checked}[1][\null]{\ensuremath{\boldsymbol{\surd}}}
\begin{document}

\par\noindent {\LARGE\bf
Algebraic method for finding equivalence groups
\par}

{\vspace{4mm}\par\noindent {\bf Alexander Bihlo$^\dag\, ^\ddag$, Elsa Dos Santos Cardoso-Bihlo$^\S$ and Roman O. Popovych$^\S\, ^\P$
} \par\vspace{2mm}\par}

{\vspace{2mm}\par\noindent {\it
$^{\dag}$~Department of Mathematics and Statistics, Memorial University of Newfoundland,\\ St.\ John's (NL) A1C 5S7, Canada
}}
{\vspace{2mm}\par\noindent {\it
$^{\ddag}$~Department of Mathematics, University of British Columbia,\\ Vancouver (BC) V6T 1Z2, Canada
}}
{\vspace{2mm}\par\noindent {\it
$^{\S}$~Wolfgang Pauli Institut, Universit\"at Wien, Oskar-Morgenstern-Platz 1, 1090 Wien, Austria
}}
{\vspace{2mm}\par\noindent {\it
$^\P$~Institute of Mathematics of NAS of Ukraine, 3 Tereshchenkivska Str., 01601 Kyiv, Ukraine
}}

{\vspace{2mm}\par\noindent {\it
\textup{E-mail:} abihlo@mun.ca, elsa.cardoso@univie.ac.at, rop@imath.kiev.ua
}\par}

{\vspace{5mm}\par\noindent\hspace*{5mm}\parbox{150mm}{\small
The algebraic method for computing the complete point symmetry group of a system of differential equations
is extended to finding the complete equivalence group of a class of such systems.
The extended method uses the knowledge of the corresponding equivalence algebra.
Two versions of the method are presented, where the first involves the automorphism group of this algebra and the second is based on a list of its megaideals.
We illustrate the megaideal-based version of the method with the computation
of the complete equivalence group of a class of nonlinear wave equations with applications in nonlinear elasticity.
}\par\vspace{4mm}}

\section{Introduction}

Classes of (systems of) differential equations are (systems of) differential equations that contain parameters, which may be constants or functions, and are collectively referred to as arbitrary elements. Most systems of differential equations of science and engineering involve such arbitrary elements and hence are classes rather than single systems. The wide occurrence of classes of differential equations is the reason why their study has a prominent place in group analysis of differential equations \cite{bihl11Dy,ovsi82Ay,popo06Ay,popo10Ay}. The classification of Lie symmetry properties of systems from a class depending on the values of the arbitrary elements is called the group classification problem for this class.

Equivalence transformations of a class of differential equations are point (resp.\ contact) transformations in the underlying space of independent and dependent variables, involved derivatives and arbitrary elements that map each system from the class to another system from the same class \cite{ovsi82Ay,popo10Ay}.
Such transformations play a central role in group classification of differential equations since
they can be used to select the simplest representative among similar systems.
Continuous equivalence transformations form a connected Lie group, which can thus be found using the infinitesimal method by computing the corresponding Lie algebra, the so-called equivalence algebra, i.e., the set of vector fields which generate one-parameter subgroups of equivalence transformations. Due to its algorithmic nature, this is a standard way for computing equivalence transformations in papers on group classification of differential equations~\cite{akha91Ay,ovsi82Ay}.

The main shortcoming of the infinitesimal method is that it misses discrete equivalence transformations. That is, not the complete equivalence group is obtained but only its continuous component. The situation is completely analogous to the construction of the maximal Lie invariance group for a single system of differential equations, which is carried out within the framework of the infinitesimal method; but in order to present the corresponding complete group of point symmetries, also discrete symmetry transformations have to be computed.

Finding complete point symmetry groups (and, analogously, complete equivalence groups) is a nontrivial task if the \emph{direct method} is applied \cite{bihl11Cy,king91By,king91Ay,king98Ay,popo10Ay}.
In the direct method one computes point symmetries without the detour of infinitesimal generators. The direct method hence also allows one to find discrete point symmetries and thus to construct the complete point symmetry group.%
\footnote{%
Discrete symmetries like alternating signs of independent and/or dependent variables can easily be found ad hoc, by empiric methods.
Even if such trivial symmetries exhaust a complete set of discrete symmetries that are independent up to composing with continuous symmetries,
which is often the case, the problem is to prove that this is really the case.
Such a proof cannot be realized within an empirical approach.
}
This method does not use the linearization of the determining equations, which enables the algorithmic computation of Lie symmetries, and thus usually one has to solve a system of nonlinear partial differential equations; which is a challenging endeavor.
For systems with a~nontrivial continuous component of the complete point symmetry group, a refinement for the pure direct method can be used, the so-called \emph{algebraic method}.
The algebraic method is simpler in that it allows one to restrict the principal form of point transformations before invoking the direct method. In particular, the solution of nonlinear systems of partial differential equations can often be completely avoided.

The central idea of the algebraic method for the computation of discrete symmetries of differential equations is the following:
Given a system of differential equations~$\mathcal L$, let~$\mathfrak g_{\mathcal L}$ denote its maximal Lie invariance algebra, which is known.
Push-forwards of vector fields defined on the corresponding space of independent and dependent variables
by point symmetries of~$\mathcal L$ induce automorphisms of~$\mathfrak g_{\mathcal L}$.
This property can be used in order to restrict the form of those point transformations that can be symmetries of~$\mathcal L$.
The restricted form can then be substituted into the system~$\mathcal L$ from which point one still proceeds with the direct method.
In fact, the algebraic method just simplifies the application of the direct method.

The original version of the algebraic method was proposed by Hydon~\cite{hydo98Ay,hydo00By,hydo00Ay}; see also~\cite{gray13Ay}.
It is based on the immediate use of the above property of point symmetries of~$\mathcal L$
and requires the explicit computation of the entire automorphism group of~$\mathfrak g_{\mathcal L}$;
this is why we call it the automorphism-based version of the algebraic method or, shortly, the \emph{automorphism-based method}.
The computation of the entire automorphism group is a comprehensive or even impossible task for infinite-dimensional Lie algebras.
Therefore, for a proper application of the automorphism-based method the algebra~$\mathfrak g_{\mathcal L}$ has to be finite dimensional,
which is a severe limitation of the applicability of this version of the algebraic method.

There are countless examples of systems of differential equations with infinite-dimensional maximal Lie invariance algebras, especially in hydrodynamics. This is why in~\cite{bihl11Cy} an essential refinement of the algebraic method was proposed. It is built around the notion of fully characteristic ideals (shortly called \emph{megaideals}); these are vector subspaces of a Lie algebra~$\mathfrak g$ that are invariant under any transformation from the automorphism group of $\mathfrak g$ \cite{bihl11Cy,card12Ay,popo03Ay}, see also \cite[Exercise~14.1.1]{hilg11Ay}. Usually, the requirement for a linear nondegenerate mapping of $\mathfrak g$ to leave invariant each of the megaideals of $\mathfrak g$ produces enough constraints to understand the rough structure of automorphisms. Moreover, in~\cite{bihl11Cy,card12Ay,popo03Ay} several propositions were proved that enable the systematic computation of new megaideals using known ones. Given a system of differential equations~$\mathcal L$, let~$\mathfrak g_{\mathcal L}$ be its maximal Lie invariance algebra.
If a sufficient number of megaideals of the maximal Lie invariance algebra~$\mathfrak g_{\mathcal L}$ of the system~$\mathcal L$ is known, the condition of their invariance with respect to automorphisms of~$\mathfrak g_{\mathcal L}$ can be used in order to derive maximal restrictions on the form of a~general point symmetry transformation of~$\mathcal L$ that are obtainable within the framework of the algebraic method.
This is the essence of the megaideal-based version of the algebraic method or, shortly, the \emph{megaideal-based method}.

The effectiveness of this version has been demonstrated by computing the complete point symmetry group of the barotropic vorticity equation on the beta-plane~\cite{bihl11Cy}, the vorticity equation on the sphere~\cite{card12Ay}, the quasi-geostrophic two-layer model~\cite{bihl11By} and the system of primitive equations~\cite{card15Ay}. For each of these (systems of) differential equations, the corresponding maximal Lie invariance algebra is infinite dimensional.

The same techniques as applied to compute the complete point symmetry group of a system of differential equations can also be used to compute the complete equivalence group of a class of systems of differential equations. Moreover, the megaideal-based method is even more effective here than for symmetry groups as equivalence groups are often infinite dimensional. Showing how to realize the computation of equivalence groups using the algebraic method is the purpose of the present paper.
Both versions of the algebraic method are discussed.

We illustrate the megaideal-based method by the computation of the equivalence group for the class of nonlinear wave equations of the general form
\begin{subequations}\label{eq:IbragimovClass}
\begin{equation}\label{eq:IbragimovClassA}
u_{tt}=f(x,u_x)u_{xx}+g(x,u_x),
\end{equation}
with the nonvanishing condition
\begin{equation}\label{eq:IbragimovClassB}
(f_{u_x},g_{u_xu_x})\ne(0,0).
\end{equation}
\end{subequations}
This condition is essential as it guarantees that all equations from the class~\eqref{eq:IbragimovClass} are really nonlinear, cf.~\cite{bihl11Dy}. Physically, this class arises, in particular, in the study of nonlinear elasticity.

\looseness=1
The study of the class~\eqref{eq:IbragimovClass} plays a prominent role in the field of group classification of differential equations. It was initially considered in~\cite{ibra91Ay}, where one-dimensional symmetry extensions of the kernel algebra with respect to a fixed finite-dimensional subalgebra of the infinite-dimensional equivalence algebra of this class were found. In other words, a partial preliminary group classification problem was solved. A similar direction was taken in~\cite{hari93Ay}, where the partial preliminary group classification of the class~\eqref{eq:IbragimovClass} with respect to one-dimensional subalgebras of an infinite-dimensional subalgebra of the equivalence algebra was considered.
The group classification problem for systems of the form $v_t = a(x,v)w_x$ and $w_t = b(x,v)v_x$ related to equations from the original class~\eqref{eq:IbragimovClass} was first investigated in~\cite{ibra00Ay} and then exhaustively solved using the algebraic method of group classification in~\cite{khab09Ay}.
A study related to the group classification problem of the class~\eqref{eq:IbragimovClass} was carried out in~\cite{ibra04Ay}, resulting in the construction of second-order differential invariants of the equivalence algebra. Finally, the complete group classification problem for the class~\eqref{eq:IbragimovClass} was solved in~\cite{bihl11Dy} by the enhanced algebraic method of group classification, thus completing the over twenty years long investigation of the point symmetry properties of this~class.

The further structure of the present paper is as follows. In Section~\ref{sec:TheoryEquivalenceGroups},
we recall some definitions related to classes of systems of differential equations, their equivalence algebras and their equivalence groups.
Section~\ref{sec:TheoryAlgebraicMethod} contains the necessary material explaining the algebraic method for the computation of (complete point) equivalence groups.
Both versions of the method---automorphism- and megaideal-based---are introduced and presented in the form of step-by-step procedures.
In Section~\ref{sec:ExampleNonlinearWave}, we illustrate the megaideal-based method with the computation of the equivalence group of the class~\eqref{eq:IbragimovClass}.
In the final Section~\ref{sec:ConclusionAlgebraicMethod}, we summarize our results and give some thoughts about future directions of investigation.

\section{The equivalence group and equivalence algebra}\label{sec:TheoryEquivalenceGroups}

In this section we recall the central definition of the equivalence group and the equivalence algebra of a class of (systems of) differential equations.
These are two of the fundamental concepts that play a key role in group classification of differential equations.
As such, they are discussed extensively in the papers and books on this subject, see, e.g.,~\cite{akha91Ay,bihl11Dy,card11Ay,ovsi82Ay,popo06Ay,popo10Cy,popo10Ay}.
It is natural to first rigorously define the notion of a class of differential equations.

Let $\mathcal L_\theta$: \smash{$L(x,u_{(p)},\theta_{(q)}(x,u_{(p)}))=0$} denote a system of differential equations parameterized by the tuple of arbitrary elements $\theta(x,u_{(p)}) = (\theta^1(x,u_{(p)}),\dots,\theta^k(x,u_{(p)}))$. Here and in the following, the tuple $x=(x_1,\dots, x_n)$ consists of the independent variables and~$u_{(p)}$ includes both the tuple of dependent variables~$u=(u^1,\dots,u^m)$ as well as all the derivatives of~$u$ with respect to~$x$ of order up to~$p$.
By~$\theta_{(q)}$ we denote the partial derivatives of the arbitrary elements $\theta$ of order not exceeding $q$ for which both $x$ and $u_{(p)}$ act as the independent variables.

In the definition of the class~\eqref{eq:IbragimovClass} we explicitly included the inequality $(f_{u_x},g_{u_xu_x})\ne(0,0)$,
which restricts the values that the two arbitrary elements $f$ and $g$ of the class~\eqref{eq:IbragimovClass} can take.
Such a set of so-called auxiliary conditions on the arbitrary elements~$\theta$
is a necessary and important component of the precise definition of the class of differential equations
but it is often posed in an implicit way.
The procedure can be formalized as follows.

Consider the system of auxiliary differential equations $S(x,u_{(p)},\theta_{(q')}(x,u_{(p)}))=0$ and inequalities $\Sigma(x,u_{(p)},\theta_{(q')}(x,u_{(p)}))\ne0$,%
\footnote{%
Other kinds of inequalities ($>$, $<$, etc.) as well as collections of them are also relevant.
}
in which again both~$x$ and~$u_{(p)}$ act as independent variables. Components of the tuples $S$ and~$\Sigma$ are smooth functions of~$x$, $u_{(p)}$ and $\theta_{(q')}$. It is thus required that the tuple of arbitrary elements~$\theta$ runs through the solution set, denoted by~$\mathcal S$, of both the auxiliary equations $S=0$ and inequalities $\Sigma\ne0$.

The above discussion can be summarized in the following definition:

\begin{definition}
The set $\{\mathcal L_\theta\mid\theta\in\mathcal S\}$ denoted by~$\mathcal L|_{\mathcal S}$ is called a \emph{class of differential equations}
defined by parameterized systems~$\mathcal L_\theta$ and the set~$\mathcal S$ of arbitrary elements~$\theta$.
\end{definition}

In order to define the equivalence group of the class~$\mathcal L|_{\mathcal S}$ it is convenient to first introduce the notion of admissible transformations and the notion of equivalence groupoid~\cite{bihl11Dy,popo06Ay,popo10Cy,popo10Ay}.
Given the arbitrary elements $\theta,\tilde\theta\in\mathcal S$ with the associated systems $\mathcal L_\theta$ and \smash{$\mathcal L_{\tilde\theta}$} from the class~$\mathcal L|_{\mathcal S}$, the set $\mathrm T(\theta,\tilde \theta)$ of point transformations that map the system $\mathcal L_\theta$ to the system \smash{$\mathcal L_{\tilde\theta}$} is called the set of admissible (point) transformations from $\mathcal L_\theta$ to \smash{$\mathcal L_{\tilde\theta}$}.
In particular, the maximal point symmetry (pseudo)group $G_\theta$ of the system $\mathcal L_\theta$ coincides, by definition, with the set $\mathrm{T}(\theta,\theta)$.

A triplet $(\theta,\tilde\theta,\phi)$, where $\theta,\tilde\theta\in\mathcal S$ with $\mathrm T(\theta,\tilde\theta)\ne\varnothing$ and $\phi\in \mathrm T(\theta,\tilde\theta)$, is called an \emph{admissible transformation} in the class~$\mathcal L|_{\mathcal S}$.
In other words, an admissible transformation is a triplet consisting of
a pair of similar systems (a source and a target system) and a point transformation connecting these two systems.

The partial binary operation of composition is naturally defined for pairs of admissible transformations
for which the target system of the first admissible transformation and the source equation of the second admissible transformation coincides,
$(\theta,\tilde\theta,\phi)\circ(\tilde\theta,\bar\theta,\tilde\phi)=(\theta,\bar\theta,\tilde\phi\circ\phi)$.
Each admissible transformation is invertible, $(\theta,\tilde\theta,\phi)^{-1}=(\tilde\theta,\theta,\phi^{-1})$,
i.e., the inversion of admissible transformations is a unitary operation defined everywhere.
The set of admissible transformations of the class~$\mathcal L|_{\mathcal S}$ that is endowed with the operations of composition and taking the inverse
is called the \emph{equivalence groupoid} of this class and denoted by $\mathcal G^\sim=\mathcal G^\sim(\mathcal L|_{\mathcal S})$.
All the groupoid axioms are obviously satisfied for~$\mathcal G^\sim$.

For a general class of differential equations,
it is not possible to relate its equivalence groupoid with a group of point transformations
that act in the extended space of $(x,u_{(p)},\theta)$, respect the contact structure on the space of $(x,u_{(p)})$ and preserve the class.
However, it makes sense to single out a maximal part of the equivalence groupoid that admits such a relation;
then the associated group is referred to as the equivalence group.

\begin{definition}\label{def:UsualEquivalenceGroup}
The \emph{(usual) equivalence group}, denoted by $G^{\sim}=G^{\sim}(\mathcal L|_{\mathcal S})$, of the class~$\mathcal L|_{\mathcal S}$ is the (pseudo)group of point transformations in the space of $(x,u_{(p)},\theta)$, each element~$\Phi$ of which satisfies the following properties:
It is projectable to the space of $(x,u_{(p')})$ for any $p'$ with $0\leqslant p'\leqslant p$.
The projection $\Phi|_{(x,u_{(p')})}$ is the $p'$th order prolongation of $\Phi|_{(x,u)}$.
For any~$\theta$ from~$\mathcal S$ its image $\Phi\theta$ also belongs to~$\mathcal S$.
Finally, $\Phi|_{(x,u)}\in\mathrm T(\theta,\Phi\theta)$.
\end{definition}

Here we say that a point transformation~$\varphi$: $\tilde z=\varphi(z)$ in the space of variables $z=(z_1,\ldots,z_k)$
is projectable on the space of variables $z'=(z_{i_1},\ldots,z_{i_{k'}})$ with $1\leqslant i_1<\cdots<i_{k'}\leqslant k$
if the expressions for the transformed variables~$\tilde z'$ depend only on~$z'$.
The projection of~$\varphi$ to the $z'$-space is denoted by $\varphi|_{z'}$, $\tilde z'=\varphi|_{z'}(z')$.

Elements of $G^{\sim}$ are called \emph{equivalence transformations} of the class~$\mathcal L|_{\mathcal S}$.
Each equivalence transformation~$\Phi$ induces a family of admissible transformations parameterized by the arbitrary elements,
$\{(\theta,\Phi\theta,\Phi|_{(x,u)})\mid \theta\in\mathcal S\}$.
If the entire equivalence groupoid~$\mathcal G^\sim$ is induced by the equivalence group~$G^\sim$ in the above way,
then transformational properties of the class~$\mathcal L|_{\mathcal S}$ are particularly nice and
the class is called \emph{normalized} \cite{bihl11Dy,popo06Ay,popo10Cy,popo10Ay}.
The notion of normalized classes of differential equations serves as basis for developing the algebraic method of group classification of differential equations.

If the arbitrary elements~$\theta$ do not depend on derivatives of~$u$ of order greater than~$p'$, $p'<p$,
then the group~$G^{\sim}$ can be assumed to act in the space of $(x,u_{(p')},\theta)$.

In some cases, e.g., if the arbitrary elements~$\theta$ depend on $x$ and $u$ only,
we can neglect the condition that the transformation components for~$(x,u)$ of equivalence transformations do not involve~$\theta$,
which gives the \emph{generalized equivalence group} $G^{\sim}_{\rm gen}=G^{\sim}_{\rm gen}(\mathcal L|_{\mathcal S})$
of the class~$\mathcal L|_{\mathcal S}$ \cite{mele94Ay,popo06Ay,popo10Ay}.
Each element~$\Phi$ of~$G^{\sim}_{\rm gen}$ is a point transformation in the $(x,u,\theta)$-space
such that for any~$\theta$ from~$\mathcal S$ its image $\Phi\theta$ also belongs to~$\mathcal S$,
and $\Phi(\cdot,\cdot,\theta(\cdot,\cdot))|_{(x,u)}\in\mathrm{T}(\theta,\Phi\theta)$.

It was mentioned in the introduction that the computation of point transformations of systems of differential equations using the direct method usually involves the solution of nonlinear systems of differential equations. Thus, rather than studying finite equivalence transformations, it is common in the framework of group analysis of differential equations to only consider their infinitesimal counterparts. This restriction linearizes the nonlinear determining equations for finding such transformations, making their computation essentially algorithmic. Thus, to the continuous component of the equivalence group $G^\sim$ one may associate the Lie algebra $\mathfrak g^\sim$ of vector fields in the space of $(x,u_{(p)},\theta)$, which for any $0\leqslant p'\leqslant p$ are projectable to the space of $(x,u_{(p')})$ with the property that their projections to the space of $(x,u_{(p')})$ are the $p'$th order prolongations of their projections to the space of $(x,u)$. Moreover, these vector fields are the generators of one-parameter groups of equivalence transformations of the class $\mathcal L|_{\mathcal S}$.
The Lie algebra~$\mathfrak g^\sim$ is called the \emph{(usual) equivalence algebra} of the class $\mathcal L|_{\mathcal S}$.

In a similar way, by considering the generalized equivalence group of the class $\mathcal L|_{\mathcal S}$ instead of the usual one
and  by allowing the components of vector fields that correspond to $x$ and $u$ to also depend on $\theta$,
one can define the \emph{generalized equivalence algebra} of the class $\mathcal L|_{\mathcal S}$.

The (usual or generalized) equivalence algebra $\mathfrak g^\sim$ of a class~$\mathcal L|_{\mathcal S}$ can be found by invoking an infinitesimal invariance criterion completely analogous to that used for the computation of the infinitesimal generators of one-parameter point symmetry groups of a single system of differential equations~\cite{akha91Ay,ovsi82Ay}.

In order to use the infinitesimal invariance criterion properly, one has to consider a vector field~$Q$ on the extended space of independent variables~$x$, derivatives $u_{(p)}$ and arbitrary elements~$\theta$. The components of~$Q$ corresponding to derivatives of $u$ with respect to $x$ should respect the contact structure of the space of independent and dependent variables. In other words, these coefficients should be obtained using the general prolongation formula~\cite{olve86Ay,ovsi82Ay}. The vector field~$Q$ should be prolonged to the derivatives of the arbitrary elements assuming both $x$ and $u_{(p)}$ as the independent variables; again, the general prolongation formula should be used for this purpose.
The prolonged vector field is then applied to the joint system~$\Delta$ of the general form \smash{$L(x,u_{(p)},\theta_{(q)})=0$} of systems from the class~$\mathcal L|_{\mathcal S}$ and of the auxiliary conditions defining the set~$\mathcal S$ of values of the arbitrary elements~$\theta$. If necessary, differential consequences of equations from the system~$\Delta$  have to be considered. The remaining computational procedure then parallels the computation of infinitesimal generators of Lie point symmetries and involves the solution of an overdetermined linear system of partial differential equations.

\section{Algebraic method}\label{sec:TheoryAlgebraicMethod}

As was mentioned in the introduction, any point symmetry transformation~$\mathcal T$ of a system of differential equations~$\mathcal L$ generates an automorphism of the maximal Lie invariance algebra of~$\mathcal L$ via push-forwarding of vector fields in the space of system variables. This condition implies constraints for the transformation~$\mathcal T$ which are then taken into account in further calculations using the direct method~\cite{bihl11Cy,card12Ay,card15Ay,gray13Ay,hydo98Ay,hydo00By,hydo00Ay}.
The set of transformations found in the way described constitute the complete point symmetry group of the system~$\mathcal L$ including both continuous and discrete point transformations.

The above algebraic method can be easily extended to the framework of equivalence transformations. A basis for this is given by the following simple theorem, which is proved in the same way as the similar assertion for point symmetries.

\begin{theorem}\label{pro:BasicPropositionOnAlgMethodForFindingEquivGroup}
Let $\mathcal L|_{\mathcal S}$ be a class of (systems of) differential equations,
$G^\sim$ and~$\mathfrak g^\sim$ the equivalence group and the equivalence algebra of this class (of the same type, namely, either usual or generalized ones).%
\footnote{%
The further consideration does not depend on what kind of equivalence groups and equivalence algebras (usual or generalized) are involved.
For consistency, we should just handle the usual (resp.\ generalized) equivalence algebra when looking for the usual (resp.\ generalized) equivalence group
and impose relevant a priori restrictions on the admitted structure of equivalence transformations.
}
Any transformation~$\mathcal T$ from~$G^\sim$ induces an automorphism of~$\mathfrak g^\sim$ via push-forwarding of vector fields
in the relevant space of independent variables, derivatives of unknown functions and arbitrary elements of the class.
\end{theorem}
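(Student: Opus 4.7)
The plan is to mimic the classical argument that relates point symmetries of a single system to automorphisms of its maximal Lie invariance algebra, working instead in the extended space of $(x,u_{(p)},\theta)$ with the projectability structure baked into Definition~\ref{def:UsualEquivalenceGroup}. The only serious input is that $G^\sim$ is a genuine (pseudo)group and that $\mathfrak g^\sim$ is its Lie algebra of infinitesimal generators of one-parameter subgroups of equivalence transformations.

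First I would fix an arbitrary $Q\in\mathfrak g^\sim$ and let $\{\mathcal T^Q_\varepsilon\}_{\varepsilon\in\mathbb R}$ denote the local one-parameter subgroup of $G^\sim$ generated by $Q$; by definition of the equivalence algebra, each $\mathcal T^Q_\varepsilon$ is an equivalence transformation of the class $\mathcal L|_{\mathcal S}$. Given an element $\mathcal T\in G^\sim$, I would then consider the conjugated family $\mathcal T^{\tilde Q}_\varepsilon:=\mathcal T\circ\mathcal T^Q_\varepsilon\circ\mathcal T^{-1}$. Because $G^\sim$ is closed under composition and inversion, every $\mathcal T^{\tilde Q}_\varepsilon$ lies in $G^\sim$, and the map $\varepsilon\mapsto \mathcal T^{\tilde Q}_\varepsilon$ is still a homomorphism from $\mathbb R$, hence a one-parameter subgroup of $G^\sim$.

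Next I would identify its infinitesimal generator. The standard fact that conjugation of a flow corresponds to the push-forward of its generating vector field shows that the infinitesimal generator of $\{\mathcal T^{\tilde Q}_\varepsilon\}$ is precisely $\tilde Q=\mathcal T_*Q$, where $\mathcal T_*$ denotes the push-forward of vector fields in the $(x,u_{(p)},\theta)$-space. Since $\{\mathcal T^{\tilde Q}_\varepsilon\}\subset G^\sim$, the generator $\mathcal T_*Q$ belongs to $\mathfrak g^\sim$. To confirm this assignment is well-defined at the level of $\mathfrak g^\sim$, I would check that the push-forward respects the structural conditions listed in Definition~\ref{def:UsualEquivalenceGroup}: $\mathcal T$ is itself projectable to each $(x,u_{(p')})$-space for $0\leqslant p'\leqslant p$, and since prolongation commutes with the composition of projectable transformations, the push-forward $\mathcal T_*Q$ inherits the corresponding projectability and the property that its projections to $(x,u_{(p')})$ are the $p'$th order prolongations of its projection to $(x,u)$.

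Finally I would verify the algebraic properties of the map $\varphi_{\mathcal T}\colon Q\mapsto \mathcal T_*Q$. Linearity and preservation of the Lie bracket are standard features of the push-forward operation on vector fields, so $\varphi_{\mathcal T}$ is a Lie algebra homomorphism of $\mathfrak g^\sim$ into itself; bijectivity follows because $\varphi_{\mathcal T^{-1}}$ is a two-sided inverse, by the functoriality $(\mathcal T\circ\mathcal T')_*=\mathcal T_*\mathcal T'_*$. Thus $\varphi_{\mathcal T}\in\mathrm{Aut}(\mathfrak g^\sim)$. The main (and really only) obstacle is the bookkeeping around the projectability and prolongation conditions that single out $\mathfrak g^\sim$ inside the space of all vector fields on $(x,u_{(p)},\theta)$; everything else is an application of the correspondence between conjugation of flows and push-forward of their generators, exactly as in the symmetry-group case.
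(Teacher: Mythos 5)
Your proposal is correct and follows essentially the same route as the paper's own proof: conjugate the one-parameter group $\{\exp(\ve Q)\}$ by $\mathcal T$, identify the generator of the conjugated flow with the push-forward $\mathcal T_*Q$, conclude that it lies in $\mathfrak g^\sim$, and then use that push-forwards preserve the Lie bracket. The extra bookkeeping you include on projectability and prolongation is a harmless elaboration of the step the paper compresses into ``the generator of $\tilde G_Q\subset G^\sim$ belongs to $\mathfrak g^\sim$.''
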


\begin{proof}
Consider an arbitrary vector field $Q\in\mathfrak g^\sim$.
The local one-parameter transformation group $G_Q=\{\exp(\ve Q)\}$ associated with $Q$
is contained in~$G^\sim$.
Then the one-parameter transformation group $\tilde G_Q=\{\mathcal T\exp(\ve Q)\mathcal T^{-1}\}$,
which is similar to~$G_Q$ with respect to~$\mathcal T$ and whose generator is $\mathcal T_*Q$,
is also contained in~$G^\sim$.
This means that the vector field $\mathcal T_*Q$ belongs to~$\mathfrak g^\sim$.
An arbitrary push-forward respects the Lie bracket of vector fields,
$[\mathcal T_*Q,\mathcal T_*Q']=\mathcal T_*[Q,Q']$ for any $Q,Q'\in\mathfrak g^\sim$.
Therefore, $\mathcal T_*$ is an automorphisms of~$\mathfrak g^\sim$.
\end{proof}

In other words, for each element~$\mathcal T$ of~$G^\sim$ there exists an automorphism~$\mathfrak T$ of~$\mathfrak g^\sim$
such that the condition $\mathcal T_*Q=\mathfrak T Q$ is satisfied for any $Q$ from~$\mathfrak g^\sim$.

It should be noted that while the correspondence $\mathcal T\to\mathcal T_*$ defines a representation of~$G^\sim$ in~$\mathrm{Aut}(\mathfrak g^\sim)$,
this representation is often unfaithful.
This is always the case if the equivalence group~$G^\sim$ (resp.\ the equivalence algebra~$\mathfrak g^\sim$) has a nontrivial center.
Moreover, the representation image~$G^\sim_*$ is a subgroup of the automorphism group $\mathrm{Aut}(\mathfrak g^\sim)$
that may be essentially smaller than the entire automorphism group $\mathrm{Aut}(\mathfrak g^\sim)$.
In other words, there may exist many automorphisms of~$\mathfrak g^\sim$ that are not induced by elements of~$G^\sim$.

The discussion on continuous point symmetries of a single system of differential equations~$\mathcal L$ given in~\cite{card12Ay} can be adopted to equivalence transformations as well. In particular, continuous equivalence transformations of the class~$\mathcal L|_{\mathcal S}$ can be composed and thus generate a connected normal subgroup, denoted by $U^\sim$, of the equivalence group~$G^\sim$. The transformations from $U^\sim$ induce mappings on the equivalence algebra~$\mathfrak g^\sim$ that are internal automorphisms of~$\mathfrak g^\sim$ thus generating the normal subgroup $\mathrm{Int}(\mathfrak g^\sim)$ of the automorphism group $\mathrm{Aut}(\mathfrak g^\sim)$. So, $\mathrm{Int}(\mathfrak g^\sim)=U^\sim_*$. The representatives of the elements of the factor group $G^\sim/U^\sim$ in~$G^\sim$ will be interpreted as \emph{discrete equivalence transformations} of the class~$\mathcal L|_{\mathcal S}$. It is clear from the construction that it makes sense to consider only discrete equivalence transformations that are independent up to composing them with continuous equivalence transformations of the class~$\mathcal L|_{\mathcal S}$. When $G^\sim$ is infinite dimensional, the factorization is usually difficult but it is sometimes still possible to realize it and thus to identify the discrete equivalence transformations. An example will be given in Section~\ref{sec:ExampleNonlinearWave}.

\begin{remark}
Classes of systems of differential equations with a single dependent variable (i.e., with $m=1$)
may possess contact equivalence transformations that are not prolongations of point transformations.
In this case the entire theory can be extended to contact equivalence transformations.
\end{remark}

\subsection{Version based on automorphisms}\label{sec:HydonMethod}

Theorem~\ref{pro:BasicPropositionOnAlgMethodForFindingEquivGroup} allows us to extend the automorphism-based method for finding point symmetry groups of systems of differential equations~\cite{hydo98Ay,hydo00By,hydo00Ay} to equivalence groups of classes of such systems. The extension properly works if the corresponding equivalence algebra~$\mathfrak g^\sim$ is of finite nonzero (and, moreover, low) dimension since the knowledge of the entire group $\mathrm{Aut}(\mathfrak g^\sim)$ is needed.
A new feature is that one should consider \emph{appropriate} point transformations and vector fields
in the relevant extended space of independent variables, derivatives of unknown functions and arbitrary elements, i.e., of $(x,u_{(p)},\theta)$.
If the arbitrary elements~$\theta$ do not depend on derivatives of~$u$ of order greater than~$p'$, $p'<p$,
then these appropriate objects can be restricted to the space of $(x,u_{(p')},\theta)$.
In the case $p'>0$ they should be consistent with the contact structure of the space of $(x,u_{(p')})$.
In the course of computing usual equivalence groups, they should also be projectable to the space of the system variables $(x,u)$.

We present the extension of the automorphism-based method in the form of a step-by-step procedure.
In what follows the indices $i$, $j$ and $k$ run from 1 to $n=\dim\mathfrak g^\sim$, and we assume summation over repeated indices.

\begin{enumerate}
\item
Let $\mathcal L|_{\mathcal S}$ be a class of (systems of) differential equations, find the equivalence algebra~$\mathfrak g^\sim$ of this class by the infinitesimal invariance criterion. Suppose that $n=\dim g^\sim<\infty$.
\item
Fix a basis $\{Q_1,\dots,Q_n\}$ of~$\mathfrak g^\sim$,
and compute the structure constants~$c_{ij}^k$ of~$\mathfrak g^\sim$ in this basis,
$[Q_i,Q_j]=c_{ij}^kQ_k$.
\item
The general form $(a^i_j)$ of automorphism matrices in the basis $\{Q_1,\dots,Q_n\}$ is obtained
via solving the system of algebraic equations \smash{$c_{i'j'}^{k'}a^{i'}_ia^{j'}_j=c_{ij}^ka^{k'}_k$}.%
\footnote{%
Automorphism groups have been computed for many finite-dimensional Lie algebras, in particular
for all semi-simple Lie algebras~\cite{jaco79Ay} and Lie algebras of dimension not greater than six,
see~\cite{chri03Ay,fish13Ay,gray13Ay,popo03Ay} and references therein.
The automorphism group of a decomposable Lie algebra is easily constructed from the automorphism groups of the decomposition components~\cite{fish13Ay}.
Finding the automorphism group of a Lie algebra~$\mathfrak g$ can be simplified
if the chosen basis of~$\mathfrak g$ is consistent with a known megaideal hierarchy of~$\mathfrak g$.
\vspace{.5ex}
}
\item
Take the general form of an appropriate point transformation $\mathcal T$ in the space of $x$, \smash{$u_{(p)}$} and $\theta$ and push-forward the vector fields $Q_1,\dots,Q_n$.
Then we set \smash{$\mathcal T_*Q_i=a^j_iQ_j$}, $i=1,\dots,n$, where \smash{$(a^j_i)$} is the general form of automorphism matrices.
Consequently equating the corresponding vector-field components of right- and left-hand sides of these equations
produces a system of differential equations for components of the transformation~$\mathcal T$.%
\footnote{%
The system derived involves parameters of the automorphism group $\mathrm{Aut}(\mathfrak g^\sim)$.
Therefore, in the course of integrating the system one in fact needs to solve a compatibility problem
in order to find values of the parameters for which the system is consistent with respect to components of~$\mathcal T$.
If the group $G^\sim_*$ does not coincide with $\mathrm{Aut}(\mathfrak g^\sim)$, which is often the case,
then there exist values of the parameters for which the system is inconsistent.
}
\item
Integrate the system and use the obtained intermediate form of $\mathcal T$ within the framework of the direct method in order to complete the system of constraints for $\mathcal T$ and to produce the final form for $\mathcal T$.
\end{enumerate}

Recall once more that the difference in the application of the above procedure to the computation of the usual (resp.\ generalized) equivalence group is that $\mathfrak g^\sim$ is the usual (resp.\ generalized) equivalence algebra; additionally, for the usual equivalence group one restricts oneself to transformations that are projectable to the space of $x$ and $u$ while for the generalized equivalence group one takes $\mathcal T$ to be a general point transformation in the space of $(x,u,\theta)$ without any additional a priori restrictions.

\begin{remark}
The procedure described above can be modified by factoring out the group~$U^\sim$ of continuous equivalence transformations of the class~$\mathcal L|_{\mathcal S}$,
which can easily be computed once the equivalence algebra~$\mathfrak g^\sim$ is known.
As $U^\sim_*=\mathrm{Int}(\mathfrak g^\sim)$,
in order to do the modification, instead of considering the general form of automorphism matrices in Step~3, one can factorize general automorphisms by inner automorphisms.
More precisely, suppose that for each equivalence class of automorphisms with respect to $\mathrm{Int}(\mathfrak g^\sim)$
one can take a representative in such a way that these representatives constitute a matrix group.
Then, this group is isomorphic to the factor group $\mathrm{Aut}(\mathfrak g^\sim)/\mathrm{Int}(\mathfrak g^\sim)$.
In general, the form of representatives is more restrictive than the general form of automorphism matrices,
which may simplify the equations for $\mathcal T$ derived in Step~3 of the above procedure.
The realization of the modified procedure results in the discrete equivalence transformations of the class~$\mathcal L|_{\mathcal S}$ composed with elements of the center of~$U^\sim$. \end{remark}

For an equivalence algebra of large finite dimension or, especially, for an infinite-dimensional equivalence algebra the computation of its automorphism groups may be a difficult problem, which also needs additional efforts for its rigorous formulation in the case of infinite dimension. Moreover, a significant part of automorphisms may not be realized via push-forwarding of vector fields by appropriate point transformations in the relevant space of independent and dependent variables, involved derivatives and arbitrary elements.
In such a case, the exhaustive description of automorphisms is not too principal
as then the equivalence transformations induce only a~(small) proper subgroup of the automorphism group.

\subsection{Version based on megaideals}

It has been pointed out in Section~\ref{sec:HydonMethod} that for the automorphism-based method, it is crucial that the equivalence algebra~$\mathfrak g^\sim$ is low dimensional. If this is not the case, the computation of the automorphism group $\mathrm{Aut}(\mathfrak g^\sim)$ becomes an intricate problem. It is a quite common situation for a class of differential equations to be studied that its equivalence algebra~$\mathfrak g^\sim$ is infinite dimensional and thus a more appropriate version of the algebraic method is needed. Mathematically, the main problem with the version reviewed in Section~\ref{sec:HydonMethod} is that in computing the automorphism group $\mathrm{Aut}(\mathfrak g^\sim)$ one also has to find the elements $\mathrm{Aut}(\mathfrak g^\sim)\setminus G^\sim_*$, which cannot be used for the computation of~$G^\sim_*$.

This deficiency is remedied by working with megaideals of~$\mathfrak g^\sim$ rather than with~$\mathrm{Aut}(\mathfrak g^\sim)$ itself
\cite{bihl11Cy,card12Ay,card15Ay}.
That is, we employ the fact that $G^\sim_*\mathfrak i\subseteq\mathfrak i$ for any megaideal $\mathfrak i$ of $\mathfrak g^\sim$,
rather than invoking the condition $G^\sim_*\subseteq\mathrm{Aut}(\mathfrak g^\sim)$.%
\footnote{%
Although the stronger condition $G^\sim_*\mathfrak i=\mathfrak i$ is always satisfied,
the inverse inclusion $G^\sim_*\mathfrak i\supseteq\mathfrak i$ is trivial in view of the presence of the identical automorphism in~$G^\sim_*$,
gives no constraints on elements of~$G^\sim$ and, therefore, can be neglected.
\vspace{1ex}
}

We recall that a \emph{fully characteristic ideal} (or, shortly, \emph{megaideal}) $\mathfrak i$ of a Lie algebra~$\mathfrak g$ is a vector subspace of $\mathfrak g$ that is invariant under any transformation from the automorphism group $\mathrm{Aut}(\mathfrak g)$ of~$\mathfrak g$ \cite{bihl11Cy,popo03Ay}, cf.\ \cite[Exercise~14.1.1]{hilg11Ay}.
We thus have that $\mathfrak T\mathfrak i=\mathfrak i$ for each megaideal~$\mathfrak i$ of~$\mathfrak g$,
whenever $\mathfrak T$ is a transformation from $\mathrm{Aut}(\mathfrak g)$.%
\footnote{%
The invariance condition $\mathfrak T^{-1}\mathfrak i\subseteq\mathfrak i$ of~$\mathfrak i$ with respect to $\mathfrak T^{-1}$
implies $\mathfrak i\subseteq\mathfrak T\mathfrak i$, which gives, together with the condition $\mathfrak T\mathfrak i\subseteq\mathfrak i$,
the equality $\mathfrak T\mathfrak i=\mathfrak i$.
}
Megaideals of~$\mathfrak g$ are ideals and, moreover, characteristic ideals of~$\mathfrak g$.
In the present context, the Lie algebra in this definition is the equivalence algebra~$\mathfrak g^\sim$ of the class of differential equations~$\mathcal L|_{\mathcal S}$.

In order to make the version of the megaideal-based method most effective, it is essential to be able to construct wide sets of megaideals. The more megaideals are known, the better restrictions on the form of admitted automorphisms can be derived. While it might be a complicated problem to obtain a complete list of megaideals of~$\mathfrak g^\sim$, for the practical application of the megaideal-based method the more tractable problem of finding a set of megaideals that gives maximal restrictions on point transformations should be tackled. Thus, it is for example not essential to consider megaideals that are sums of other megaideals, since they give weaker constraints than their individual summands jointly.

In~\cite{bihl11Cy,card12Ay,card15Ay,popo03Ay} it was shown how new megaideals can be computed from known ones. For the sake of completeness of the present paper, we collect the statements on megaideals from these papers. The central observation for practical computations is that many megaideals of a Lie algebra~$\mathfrak g$ can be constructed without prior knowledge of the automorphism group~$\mathrm{Aut}(\mathfrak g)$ using the following obvious assertions:

\begin{proposition}\label{pro:OnMegaIdeals}
 Let $\mathfrak i_1$ and $\mathfrak i_2$ be megaideals of a Lie algebra $\mathfrak g$. Then we have:
 \begin{enumerate}\itemsep=0ex
  \item The improper subalgebras of $\mathfrak g$ (i.e., the zero subspace and $\mathfrak g$ itself) are megaideals of $\mathfrak g$.
  \item The sum $\mathfrak i_1+\mathfrak i_2$, the intersection $\mathfrak i_1\cap \mathfrak i_2$ and the Lie product $[\mathfrak i_1,\mathfrak i_2]$ of megaideals are megaideals.
  \item If $\mathfrak i_2$ is a megaideal of $\mathfrak i_1$ and $\mathfrak i_1$ is a megaideal of $\mathfrak g$ then $\mathfrak i_2$ is a megaideal of $\mathfrak g$. Thus, megaideals of megaideals are again megaideals.
  \item The centralizer (resp.\ the normalizer) of a megaideal is a megaideal.
  \item All elements of the derived, upper and lower central series of a Lie algebra are its megaideals. It thus follows that the center and the derivative of~$\mathfrak g$ are megaideals.
  \item The radical~$\mathfrak r$ and nil-radical~$\mathfrak n$ (i.e., the maximal solvable and nilpotent ideals, respectively) of~$\mathfrak g$
as well as different Lie products, sums and intersections involving~$\mathfrak g$, $\mathfrak r$ and~$\mathfrak n$
($[\mathfrak g,\mathfrak r]$, $[\mathfrak r,\mathfrak r]$, $[\mathfrak g,\mathfrak n]$, $[\mathfrak r,\mathfrak n]$,  $[\mathfrak n,\mathfrak n]$, etc.) are megaideals of~$\mathfrak g$.
 \end{enumerate}
\end{proposition}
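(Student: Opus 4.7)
The unifying strategy is to observe that each object constructed in items 1--6 is defined purely in terms of the underlying Lie algebra structure (vector space operations and the Lie bracket), so the statement that it is fixed by every automorphism reduces to a direct verification using the two facts available: an automorphism $\mathfrak T\in\mathrm{Aut}(\mathfrak g)$ is a bijective linear map satisfying $\mathfrak T[x,y]=[\mathfrak Tx,\mathfrak Ty]$, and each $\mathfrak i_j$ is \emph{bijectively} mapped onto itself by hypothesis (that is, the equality $\mathfrak T\mathfrak i_j=\mathfrak i_j$ from the footnote, not merely the inclusion). My plan is to dispose of items 1--5 by short direct arguments and then to reduce item 6 to the previous ones together with the standard uniqueness of the (nil)radical.

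For item 1, any linear map fixes the zero subspace, and any bijection of $\mathfrak g$ satisfies $\mathfrak T\mathfrak g=\mathfrak g$. For item 2, apply $\mathfrak T$ componentwise: linearity gives $\mathfrak T(\mathfrak i_1+\mathfrak i_2)=\mathfrak i_1+\mathfrak i_2$ and $\mathfrak T(\mathfrak i_1\cap\mathfrak i_2)=\mathfrak i_1\cap\mathfrak i_2$ (using bijectivity for the latter), and the bracket identity yields $\mathfrak T[\mathfrak i_1,\mathfrak i_2]=[\mathfrak T\mathfrak i_1,\mathfrak T\mathfrak i_2]=[\mathfrak i_1,\mathfrak i_2]$. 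Item 3 is the only slightly delicate point: given $\mathfrak T\in\mathrm{Aut}(\mathfrak g)$, the restriction $\mathfrak T|_{\mathfrak i_1}$ is a well-defined bijection of $\mathfrak i_1$ (because $\mathfrak T\mathfrak i_1=\mathfrak i_1$) that respects the bracket, hence an element of $\mathrm{Aut}(\mathfrak i_1)$; megaidealness of $\mathfrak i_2$ inside $\mathfrak i_1$ then gives $\mathfrak T\mathfrak i_2=\mathfrak i_2$. For item 4, if $x$ centralizes $\mathfrak i$ and $y\in\mathfrak i$, then $[\mathfrak Tx,y]=\mathfrak T[x,\mathfrak T^{-1}y]=0$ since $\mathfrak T^{-1}y\in\mathfrak i$; the normalizer is handled analogously with $[\mathfrak Tx,y]=\mathfrak T[x,\mathfrak T^{-1}y]\in\mathfrak T\mathfrak i=\mathfrak i$.

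Item 5 then follows by induction. The derivative $[\mathfrak g,\mathfrak g]$ is a megaideal by items 1 and 2; subsequent terms of the derived series are obtained by iterating $[\,\cdot\,,\cdot\,]$ on a megaideal with itself, again using item 2, and similarly the lower central series is built iteratively from $\mathfrak g$. The center equals the centralizer of the megaideal $\mathfrak g$ (item 4), and the higher terms of the upper central series are obtained by the standard inductive characterization, each step being expressible through centralizers, quotients and preimages of megaideals already shown to be megaideals. Item 6 reduces to showing that the radical $\mathfrak r$ and the nil-radical $\mathfrak n$ are megaideals, after which all bracket/sum/intersection combinations with $\mathfrak g$, $\mathfrak r$, $\mathfrak n$ are megaideals by item 2. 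For this one uses that $\mathfrak r$ (resp.\ $\mathfrak n$) is by definition the \emph{unique} maximal solvable (resp.\ nilpotent) ideal; since $\mathfrak T$ sends ideals to ideals of the same dimension and preserves solvability and nilpotency (both being bracket-intrinsic), $\mathfrak T\mathfrak r$ is again a maximal solvable ideal and hence equals $\mathfrak r$, and likewise for $\mathfrak n$.

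The only genuine obstacles are conceptual rather than computational: one must be careful in item 3 to invoke bijectivity of $\mathfrak T$ on $\mathfrak i_1$ before calling the restriction an automorphism, and in item 6 one must cite uniqueness of the maximal solvable/nilpotent ideal in a finite-dimensional setting (or the analogue adopted in the paper) to conclude that $\mathfrak T\mathfrak r=\mathfrak r$ and $\mathfrak T\mathfrak n=\mathfrak n$. All remaining steps are one-line verifications from the definitions.
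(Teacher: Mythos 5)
Your proof is correct and follows essentially the same route as the paper, which treats these assertions as direct verifications from the definition of an automorphism and of a megaideal; in particular, your argument for the centralizer and normalizer via $[\mathfrak Tw,v]=\mathfrak T[w,\mathfrak T^{-1}v]$ is exactly the computation the authors single out (and later subsume as the special case $\mathfrak i_0=\mathfrak g$, $\mathfrak i_2=\{0\}$ or $\mathfrak i_2=\mathfrak i_1$ of Proposition~2). You simply supply the routine details for the remaining items, including the appropriate care with bijectivity in item~3 and with uniqueness of the (nil)radical in item~6, which the paper leaves implicit.
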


\noprint{
\begin{proof}
The assertions listed in Proposition~\ref{pro:OnMegaIdeals} are quite obvious.
Here we only prove the fourth assertion, which states
that the centralizer $\mathrm C_{\mathfrak g}(\mathfrak i)$ and the normalizer $\mathrm N_{\mathfrak g}(\mathfrak i)$ of a megaideal~$\mathfrak i$ in~$\mathfrak g$ are also megaideals of~$\mathfrak g$.

Consider arbitrary $\mathfrak T\in\mathrm{Aut}(\mathfrak g)$, $v\in\mathfrak i$, $w\in\mathrm C_{\mathfrak g}(\mathfrak i)$ and $u\in\mathrm N_{\mathfrak g}(\mathfrak i)$.
By the definitions of automorphism and centralizer we get
$[\mathfrak Tw,v]=[\mathfrak Tw,\mathfrak T\mathfrak T^{-1}v]=\mathfrak T[w,\mathfrak T^{-1}v]=0$
as $\mathfrak T^{-1}v\in\mathfrak i$ and hence $[w,\mathfrak T^{-1}v]=0$.
This means that $\mathfrak Tw\in\mathrm C_{\mathfrak g}(\mathfrak i)$.
Therefore, $\mathrm C_{\mathfrak g}(\mathfrak i)$ is a megaideal of~$\mathfrak g$.

Similarly, for the normalizer $\mathrm N_{\mathfrak g}(\mathfrak i)$ we have
$[\mathfrak Tu,v]=[\mathfrak Tu,\mathfrak T\mathfrak T^{-1}v]=\mathfrak T[u,\mathfrak T^{-1}v]\in\mathfrak i$
as $\mathfrak T^{-1}v\in\mathfrak i$ and hence $[u,\mathfrak T^{-1}v]\in\mathfrak i$.
Hence $\mathfrak Tu\in\mathrm N_{\mathfrak g}(\mathfrak i)$ and thus $\mathrm N_{\mathfrak g}(\mathfrak i)$ is a megaideal of~$\mathfrak g$.
\end{proof}
}

In order to find more megaideals without computing automorphisms,
we also apply an assertion,
which in general has no clear interpretation in terms of distinguished object related to the structure of the corresponding Lie algebra.
It was first proved in~\cite{card12Ay}, but for the sake of reference we repeat the proof here.

\begin{proposition}\label{pro:WayToFindMegaideals}
If~$\mathfrak i_0$, $\mathfrak i_1$ and~$\mathfrak i_2$ are megaideals of~$\mathfrak g$
then the set~$\mathfrak s$ of elements from~$\mathfrak i_0$ whose commutators with arbitrary elements from~$\mathfrak i_1$ belong to~$\mathfrak i_2$
is also a megaideal of~$\mathfrak g$.
\end{proposition}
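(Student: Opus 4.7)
The plan is to verify the two defining properties of a megaideal in turn: first that $\mathfrak{s}$ is a vector subspace of $\mathfrak{g}$, and second that it is preserved by every automorphism of $\mathfrak{g}$. The subspace property is immediate from the bilinearity of the Lie bracket together with the fact that $\mathfrak{i}_0$ and $\mathfrak{i}_2$ are themselves subspaces: if $v,v'\in\mathfrak{s}$ and $\alpha,\beta$ are scalars, then $\alpha v+\beta v'\in\mathfrak{i}_0$ and $[\alpha v+\beta v',w]=\alpha[v,w]+\beta[v',w]\in\mathfrak{i}_2$ for every $w\in\mathfrak{i}_1$. So this step is really only a bookkeeping check.

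The substantive step is invariance under $\mathrm{Aut}(\mathfrak{g})$. Fix an arbitrary $\mathfrak{T}\in\mathrm{Aut}(\mathfrak{g})$ and $v\in\mathfrak{s}$; the goal is to show $\mathfrak{T}v\in\mathfrak{s}$. Since $\mathfrak{i}_0$ is a megaideal, $\mathfrak{T}v\in\mathfrak{i}_0$, so what remains is the commutator condition. The natural trick, exactly analogous to the argument used for the centralizer in the hidden proof of Proposition~\ref{pro:OnMegaIdeals}, is to pull an arbitrary $w\in\mathfrak{i}_1$ back through $\mathfrak{T}^{-1}$: because $\mathfrak{i}_1$ is a megaideal and $\mathfrak{T}^{-1}\in\mathrm{Aut}(\mathfrak{g})$, we have $\mathfrak{T}^{-1}w\in\mathfrak{i}_1$, whence $[v,\mathfrak{T}^{-1}w]\in\mathfrak{i}_2$ by definition of $\mathfrak{s}$. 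Applying $\mathfrak{T}$ and using that it respects the bracket gives
\begin{equation*}
[\mathfrak{T}v,w]=[\mathfrak{T}v,\mathfrak{T}\mathfrak{T}^{-1}w]=\mathfrak{T}[v,\mathfrak{T}^{-1}w]\in\mathfrak{T}\mathfrak{i}_2=\mathfrak{i}_2,
\end{equation*}
where the last equality uses that $\mathfrak{i}_2$ is a megaideal. Since $w\in\mathfrak{i}_1$ was arbitrary, $\mathfrak{T}v\in\mathfrak{s}$.

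I do not expect a genuine obstacle here; the only conceptual point to watch is that one must invoke the megaideal property for all three of $\mathfrak{i}_0$, $\mathfrak{i}_1$, $\mathfrak{i}_2$, and specifically use it for $\mathfrak{i}_1$ through the inverse automorphism $\mathfrak{T}^{-1}$ rather than $\mathfrak{T}$ itself. This is the same maneuver that makes the centralizer and normalizer arguments work in Proposition~\ref{pro:OnMegaIdeals}, and it is why the definition of $\mathfrak{s}$ as a set of elements satisfying a commutator constraint with a megaideal produces a megaideal rather than merely a characteristic ideal.
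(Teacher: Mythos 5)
Your proof is correct and follows essentially the same route as the paper's: the subspace property is dispatched by bilinearity, and invariance under an automorphism $\mathfrak T$ is obtained by writing $w=\mathfrak T\mathfrak T^{-1}w$, using that $\mathfrak T^{-1}$ preserves $\mathfrak i_1$ and that $\mathfrak T$ preserves $\mathfrak i_0$ and $\mathfrak i_2$. No gaps.
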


\begin{proof}
It is clear that the set $\mathfrak s$ is a linear subspace of~$\mathfrak g$.
Consider an arbitrary element $z_0\in\mathfrak s$.
Hence $z_0\in\mathfrak i_0$ and $[z_0,z_1]\in\mathfrak i_2$ for any $z_1\in\mathfrak i_1$.
Then for any $\mathfrak T\in\mathrm{Aut}(\mathfrak g)$ and any $z_1\in\mathfrak i_1$ we have
$\mathfrak Tz_0\in\mathfrak i_0$ and $[\mathfrak Tz_0,z_1]=[\mathfrak Tz_0,\mathfrak T\mathfrak T^{-1}z_1]=\mathfrak T[z_0,\mathfrak T^{-1}z_1]\in\mathfrak i_2$
as $\mathfrak T^{-1}z_1\in\mathfrak i_1$ and thus $[z_0,\mathfrak T^{-1}z_1]\in\mathfrak i_2$.
This means that $\mathfrak Tz_0\in\mathfrak s$. Therefore, $\mathfrak s$ is a megaideal of~$\mathfrak g$.
\end{proof}

Assertion (iv) of Proposition~\ref{pro:OnMegaIdeals} is a particular case of Proposition~\ref{pro:WayToFindMegaideals},
where $\mathfrak i_0=\mathfrak g$ and $\mathfrak i_2=\{0\}$ for the centralizer of $\mathfrak i_1$,
and $\mathfrak i_0=\mathfrak g$ and $\mathfrak i_2=\mathfrak i_1$ for the normalizer of $\mathfrak i_1$.

\begin{remark}\label{rem:OneMoreWayToFindMegaideals}
If $\mathfrak m$ is a finite-dimensional megaideal and its automorphism group~$\mathrm{Aut}(\mathfrak m)$ is already known,
all megaideals of~$\mathfrak m$ can be found by direct computation according to the definition of megaideals
as subspaces of~$\mathfrak m$ invariant with respect to~$\mathrm{Aut}(\mathfrak m)$.
In the course of calculating the automorphisms we can use knowledge about structural megaideals of~$\mathfrak m$
such as the center, the radical, the nilradical, elements of the derived, lower central and upper central series of~$\mathfrak m$
and different megaideals related to structural megaideals via certain operations.
\end{remark}

Theorem~\ref{pro:BasicPropositionOnAlgMethodForFindingEquivGroup} implies that
any transformation~$\mathcal T$ from~$G^\sim$ satisfies the condition $\mathcal T_*\mathfrak i=\mathfrak i$ for each megaideal~$\mathfrak i$ of~$\mathfrak g^\sim$.
In the course of the derivation of constraints for $\mathcal T$, this condition is interpreted in the following way:
Let the megaideal $\mathfrak i$ consist of the vector fields $Q_\gamma$,
where $\gamma$ runs through a parameter set $\Gamma_{\mathfrak i}$,
and be spanned by the vector fields $Q_{\gamma'}$ parameterized by $\gamma'$ running through a subset $\Gamma'_{\mathfrak i}$ of $\Gamma_{\mathfrak i}$.
The condition $\mathcal T_*\mathfrak i=\mathfrak i$ implies that for any $\gamma'\in\Gamma'_{\mathfrak i}$,
there exists a $\gamma\in\Gamma_{\mathfrak i}$ such that $\mathcal T_*Q_{\gamma'}=Q_{\gamma}$.

Similarly to the automorphism-based method, the megaideal-based counterpart can also be split in a few algorithmic steps.
Some steps are the same or almost the same as for the automorphism-based method.
Even so, for the convenience of further application we describe each step in its entirety.
We also re-interpret the condition for megaideal invariance in a manner more convenient for practical use.

\begin{enumerate}
\item
Given a class of (systems of) differential equations~$\mathcal L|_{\mathcal S}$, find the equivalence algebra~$\mathfrak g^\sim$ of this class by the infinitesimal invariance criterion. Let~$\mathfrak g^\sim=\{Q_\gamma\mid\gamma\in\Gamma\}$ for some parameter set $\Gamma$.
\item
Fix a set $\{Q_{\gamma'}\mid\gamma'\in\Gamma'\}$ of vector fields spanning~$\mathfrak g^\sim$,
$\mathfrak g^\sim=\langle Q_{\gamma'}\mid\gamma'\in\Gamma'\rangle$ with $\Gamma'\subset\Gamma$,
and compute commutation relations for all pairs of spanning vector fields.
\item
Using Propositions~\ref{pro:OnMegaIdeals} and~\ref{pro:WayToFindMegaideals}, Remark~\ref{rem:OneMoreWayToFindMegaideals} and other tools,
construct as wide a list of megaideals of the equivalence algebra~$\mathfrak g^\sim$ as possible.
Optimize the list by the exclusion of inessential megaideals, which give no new constraints for automorphisms of~$\mathfrak g^\sim$
as compared with other megaideals. In particular, megaideals being sums of other megaideals are not essential.
\item
Take the general form of an appropriate point transformation $\mathcal T$ in the space of~$x$, $u_{(p)}$ and~$\theta$ and push-forward the vector fields $Q_{\gamma'}$, $\gamma'\in\Gamma'$.
For each $\gamma'$ we choose, from the above list, the minimal megaideal $\mathfrak i_{\gamma'}=\{Q_\gamma\mid\gamma\in\Gamma_{\mathfrak i_{\gamma'}}\subset\Gamma\}$
containing $Q_{\gamma'}$ and set $\mathcal T_*Q_{\gamma'}=Q_{\gamma_{\gamma'}}$,
where the parameter $\gamma_{\gamma'}$ satisfies the constraints singling out the set~$\Gamma_{\mathfrak i_{\gamma'}}$ from the entire set~$\Gamma$.
Consequently equating the corresponding vector-field components of right- and left-hand sides
in the equation \smash{$\mathcal T_*Q_{\gamma'}=Q_{\gamma_{\gamma'}}$} for each~$\gamma'$
leads to a system of differential equations for components of the transformation~$\mathcal T$.%
\footnote{%
Similarly to the automorphism-based method, the system obtained involves the parameters $\gamma_{\gamma'}$, $\gamma'\in\Gamma'$.
Therefore, integrating the system requires solving a compatibility problem
in order to find values of the parameters for which the system is consistent with respect to components of~$\mathcal T$.
In addition to that the group $G^\sim_*$ may be a proper subgroup of $\mathrm{Aut}(\mathfrak g^\sim)$,
for the megaideal-based version there may be one more reason for the system to be inconsistent for some parameter values:
In general, even the complete megaideal hierarchy cannot properly capture delicate constraints for elements of $\mathrm{Aut}(\mathfrak g^\sim)$.
}
The solution of this system results in an intermediate form for~$\mathcal T$.
\item
Use the derived form of~$\mathcal T$ within the framework of the direct method in order to complete the system of constraints for $\mathcal T$
and to produce the final form for $\mathcal T$ by solving the completed system.
\end{enumerate}

\section{Complete equivalence group of the class\\ of nonlinear wave equations}\label{sec:ExampleNonlinearWave}

In order to clarify all the algorithmic steps,
in this section we demonstrate in detail the computation of the usual equivalence group~$G^\sim$ of the class of nonlinear wave equations~\eqref{eq:IbragimovClass}
using the megaideal-based method.
The preliminary presentation of this result is contained in the arXiv preprint of~\cite{bihl11Dy}.

\subsection{Equivalence algebra}\label{sec:EquivalenceAlgebraIbragimovClass}

The equivalence algebra of class~\eqref{eq:IbragimovClass} including both linear and nonlinear equations was first computed in~\cite{ibra91Ay}. It was then shown in~\cite{bihl11Dy}, that the class consisting of only nonlinear equations admits the same equivalence algebra. Here we use both the representation of the equivalence algebra and the notation introduced in~\cite{bihl11Dy}. The equivalence algebra~$\mathfrak g^\sim$ of the class~\eqref{eq:IbragimovClass} is spanned by the vector fields
\begin{gather}\label{eq:EquivalenceAlgebraGenWaveEqs}
\begin{split}&
\DDD^u=u\p_u+u_x\p_{u_x}+g\p_g,\quad
\DDD^t=t\p_t-2f\p_f-2g\p_g,\quad
\PP^t=\p_t,\\&
\DDD(\varphi)=\varphi\p_x-\varphi_xu_x\p_{u_x}+2\varphi_xf\p_f+\varphi_{xx}u_xf\p_g,\\&
\GG(\psi)=\psi\p_u+\psi_x\p_{u_x}-\psi_{xx}f\p_g,\quad
\FF^1=t\p_u, \quad
\FF^2=t^2\p_u+2\p_g,
\end{split}
\end{gather}
where~$\varphi=\varphi(x)$ and~$\psi=\psi(x)$ run through the set of smooth functions of~$x$. It has been pointed out in~\cite{bihl11Dy} that while the elements of the equivalence algebra of class~\eqref{eq:IbragimovClass} are vector fields in the space of $(t,x,u_{(2)},f,g)$, for practical purposes it suffices to present only their projection to the space of $(t,x,u,u_x,f,g)$. The reason for this is that the vector field components corresponding to~$u_t$, $u_x$, $u_{tt}$, $u_{tx}$ and $u_{xx}$, can be obtained via prolongation from the components corresponding to $t$, $x$ and $u$. At the same time, we explicitly include the components associated with the derivative~$u_x$ in the representation of vector fields~\eqref{eq:EquivalenceAlgebraGenWaveEqs} spanning $\mathfrak g^\sim$ since the arbitrary elements~$f$ and~$g$ depend on $u_x$. Therefore, for the proper computation of commutation relations, the components with $\p_{u_x}$ are crucially needed.

For the construction of megaideals, it is convenient to also recall the nonvanishing commutation relations between the vector fields of~$\mathfrak g^\sim$. They read
\begin{align*}
 &[\GG(\psi),\DDD^u]=\GG(\psi), \quad [\FF^1,\DDD^u]=\FF^1, \quad [\FF^2,\DDD^u]=\FF^2,\\
 &[\DDD^t,\FF^1]=\FF^1, \quad [\DDD^t,\FF^2]=2\FF^2,\\
 &[\PP^t,\DDD^t]=\PP^t, \quad [\PP^t,\FF^1]=\GG(1), \quad [\PP^t,\FF^2]=2\FF^1, \\
 &[\DDD(\varphi^1),\DDD(\varphi^2)]=\DDD(\varphi^1\varphi^2_x-\varphi^1_x\varphi^2), \quad [\DDD(\varphi),\GG(\psi)]=\GG(\varphi\psi_x).
\end{align*}

Therefore, steps 1 and 2 of the megaideal-based method were in fact realized in~\cite{bihl11Dy,ibra91Ay}.

\subsection{Megaideals of equivalence algebra}
\label{sec:MegaidealsOfEquivalenceAlgebra}

In order to compute the complete equivalence group of the class~\eqref{eq:IbragimovClass} by the megaideal-based method
(cf. Section~\ref{sec:CalculationOfEquivAlgebraByAlgebraicMethod}),
we need to know a set of megaideals of the equivalence algebra~$\mathfrak g^\sim$ of this class.

Let $\mathfrak g=\mathfrak g^\sim$ for the sake of notational simplicity. Using Proposition~\ref{pro:OnMegaIdeals}, it is easy to compute the following megaideals of~$\mathfrak g^\sim$:
\begin{gather*}
\mathfrak g'=\langle\PP^t,\DDD(\varphi),\GG(\psi),\FF^1,\FF^2\rangle,\quad
\mathfrak g''=\langle\DDD(\varphi),\GG(\psi),\FF^1\rangle,\quad
\mathfrak g'''=\langle\DDD(\varphi),\GG(\psi)\rangle,\\
\mathrm C_{\mathfrak g}(\mathfrak g''')=\langle\DDD^t,\PP^t,\GG(1),\FF^1,\FF^2\rangle,\quad
\mathrm C_{\mathfrak g'}(\mathfrak g''')=\langle\PP^t,\GG(1),\FF^1,\FF^2\rangle,\\
\mathrm C_{\mathfrak g'}(\mathfrak g'')=\langle\GG(1),\FF^1,\FF^2\rangle,\quad
\mathrm Z_{\mathfrak g''}=\langle\GG(1),\FF^1\rangle,\quad
\mathrm Z_{\mathfrak g'}=\langle\GG(1)\rangle,\\
\mathrm R_{\mathfrak g}=\langle\DDD^u,\DDD^t,\PP^t,\GG(\psi),\FF^1,\FF^2\rangle,\quad
\mathrm R_{\mathfrak g'''}=\langle\GG(\psi)\rangle,
\end{gather*}
where $\mathfrak a'$, $\mathrm R_{\mathfrak a}$, $\mathrm Z_{\mathfrak a}$ and $\mathrm C_{\mathfrak a}(\mathfrak b)$
denote the derivative, the radical and the center of a Lie algebra~$\mathfrak a$
and the centralizer of a subalgebra~$\mathfrak b$ in~$\mathfrak a$, respectively.
We present proofs only for the last two equalities.

The linear span $\mathfrak s_1=\langle\DDD^u,\DDD^t,\PP^t,\GG(\psi),\FF^1,\FF^2\rangle$
obviously is a solvable ideal of~$\mathfrak g$.
Moreover, it is the maximal solvable ideal of~$\mathfrak g$.
Indeed, suppose that $\mathfrak s_1\subsetneq\mathfrak i$ and $\mathfrak i$ is an ideal of~$\mathfrak g$.
Then there exists a smooth function~$\zeta$ of~$x$ which does not identically vanish
such that the vector field $\DDD(\zeta)$ belongs to~$\mathfrak i$.
As $\mathfrak i$ is an ideal of~$\mathfrak g$,
for an arbitrary smooth function~$\varphi$ of~$x$ the commutator $[\DDD(\zeta),\DDD(\varphi)]=\DDD(\zeta\varphi_x-\zeta_x\varphi)$ belongs to~$\mathfrak i$.
If $\zeta$ is not a constant function, we define the following series of operators:
\[
R^{0k}=k^{-1}[\DDD(\zeta),\DDD(\zeta^{k+1})],\quad
R^{jk}=k^{-1}[R^{j-1,1},R^{j-1,k+1}],\quad
j,k=1,2,\dots.
\]
It is possible to prove by induction that $R^{j-1,k}=\DDD(\zeta^k(\zeta\zeta_x)^{2^j-1})\ne0$, $j,k=1,2,\dots$.
Moreover, as $R^{0k}\in\mathfrak i$, we have $R^{jk}\in\mathfrak i^{(j)}$, $j,k=1,2,\dots$, i.e., $\mathfrak i^{(j)}\ne\{0\}$ for any nonnegative integer~$j$.
This means that the ideal~$\mathfrak i$ is not solvable.
If $\zeta$ is a constant function, we can set $\zeta\equiv1$.
We choose any smooth function~$\varphi$ of~$x$ with $\varphi_{xx}\not\equiv0$ and denote~$\varphi_x$ by~$\tilde\zeta$.
As the commutator $[\DDD(1),\DDD(\varphi)]=\DDD(\tilde\zeta)$ belongs to~$\mathfrak i$,
the consideration for the previous case again implies that the ideal~$\mathfrak i$ is not solvable.
Therefore, $\mathfrak s_1$ is really the maximal solvable ideal of~$\mathfrak g$, i.e., $\mathrm R_{\mathfrak g}=\mathfrak s_1$.

The linear span $\mathfrak s_2=\langle\GG(\psi)\rangle$ is an Abelian and, therefore, solvable ideal of~$\mathfrak g'''$.
The maximality of this solvable ideal is proved in the same way as for~$\mathfrak s_1$.
Hence $\mathrm R_{\mathfrak g'''}=\mathfrak s_2$.

The same megaideals can be obtained in different ways. For example, $\langle\GG(1)\rangle=\mathrm Z_{\mathfrak g'}=\mathrm Z_{\mathfrak g'''}$.

To find one more megaideal which will be used in the course of the computation of the complete equivalence group of the class~\eqref{eq:IbragimovClass} by the algebraic method
in Section~\ref{sec:CalculationOfEquivAlgebraByAlgebraicMethod},
we should apply a more sophisticated technique than before.
There are two ways to do this.

The  way directly based on the definition of megaideals is to calculate the automorphism group~$\mathrm{Aut}(\mathfrak m)$
of the finite-dimensional megaideal $\mathfrak m=\mathrm C_{\mathfrak g}(\mathfrak g''')=\langle\DDD^t,\PP^t,\GG(1),\FF^1,\FF^2\rangle$
and then determine megaideals of~$\mathfrak m$ as subspaces of~$\mathfrak m$ which are invariant with respect to~$\mathrm{Aut}(\mathfrak m)$.
In the course of calculating the automorphisms we use the knowledge about simple megaideals of~$\mathfrak m$,
\[
\mathfrak m'=\langle\PP^t,\GG(1),\FF^1,\FF^2\rangle,\quad
\mathfrak m''=\langle\GG(1),\FF^1\rangle,\quad
\mathrm Z_{\mathfrak m}=\langle\GG(1)\rangle,\quad
\mathrm C_{\mathfrak m}(\mathfrak m'')=\langle\GG(1),\FF^1,\FF^2\rangle.
\]
The presence of the above set of nested megaideals is equivalent to that for any automorphism~$A$ of~$\mathfrak m$,
its matrix $(a_{ij})_{i,j=1}^5$ in the basis $\{\GG(1),\FF^1,\FF^2,\PP^t,\DDD^t\}$ is upper triangular with nonzero diagonal elements.
In particular,
\begin{align*}
&A\PP^t=a_{14}\GG(1)+a_{24}\FF^1+a_{34}\FF^2+a_{44}\PP^t, \\
&A\DDD^t=a_{15}\GG(1)+a_{25}\FF^1+a_{35}\FF^2+a_{45}\PP^t+a_{55}\DDD^t,
\end{align*}
where $a_{44}a_{55}\ne0$.
As $[\PP^t,\DDD^t]=\PP^t$ and $A\in\mathrm{Aut}(\mathfrak m)$, we should have $[A\PP^t,A\DDD^t]=A\PP^t$.
Collecting coefficients of basis elements in the last equality, we derive a system of equations with respect to~$a$'s
which implies, in view of the condition $a_{44}\ne0$, that $a_{55}=1$, $a_{34}=0$, $a_{24}=a_{44}a_{35}$ and
$a_{14}=a_{44}a_{25}-a_{45}a_{24}$.
As $a_{34}=0$, we get that the span $\langle\PP^t,\GG(1),\FF^1\rangle$ is a megaideal of~$\mathfrak m$ and, therefore, of~$\mathfrak g^\sim$.

The other way for finding the megaideal $\langle\PP^t,\GG(1),\FF^1\rangle$, which is based on Proposition~\ref{pro:WayToFindMegaideals},
allows us to avoid the above computation of automorphisms of the megaideal~$\mathfrak m$ and, thus, is much simpler:
Choosing $\mathfrak i_0=\mathfrak i_1=\mathrm C_{\mathfrak g'}(\mathfrak g''')=\langle\PP^t,\GG(1),\FF^1,\FF^2\rangle$
and $\mathfrak i_2=\mathrm Z_{\mathfrak g'}=\langle\GG(1)\rangle$ in Proposition~\ref{pro:WayToFindMegaideals},
we obtain the megaideal $\mathfrak s=\langle\PP^t,\GG(1),\FF^1\rangle$.

\subsection{Calculation of equivalence group}
\label{sec:CalculationOfEquivAlgebraByAlgebraicMethod}

The careful study of megaideals of the equivalence algebra~$\mathfrak g^\sim$ in Section~\ref{sec:MegaidealsOfEquivalenceAlgebra}
according to the third step of the megaideal-based procedure
supplies us with a sufficient store of megaideals in order to commence the computations directly concerned with equivalence transformations.

\begin{theorem}\label{thm:EquivalenceGroupIbragrimovClass}
The equivalence group~$G^\sim$ of the class~\eqref{eq:IbragimovClass} consists of the transformations
\begin{align}\label{eq:EquivalenceGroupIbragimovClass}
\begin{split}
 &\tilde t = c_1t+c_0, \quad \tilde x=\varphi(x), \quad \tilde u = c_2u+c_4t^2+c_3t+\psi(x), \quad \tilde u_{\tilde x}=\frac{c_2u_x+\psi_x}{\varphi_x},\\
 &\tilde f = \frac{\varphi_x^2}{c_1^2}f, \quad \tilde g = \frac{1}{c_1^2}\left(c_2g+\frac{c_2u_x+\psi_x}{\varphi_x}\varphi_{xx}f-\psi_{xx}f+2c_4\right),
\end{split}
\end{align}
where $c_0$, \dots, $c_4$ are arbitrary constants satisfying the condition $c_1c_2\ne0$ and $\varphi$ and~$\psi$ run through the set of smooth functions of~$x$, $\varphi_x\ne0$.
\end{theorem}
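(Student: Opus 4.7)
The plan is to carry out the megaideal-based procedure of Section~\ref{sec:TheoryAlgebraicMethod} using the megaideal hierarchy compiled in Section~\ref{sec:MegaidealsOfEquivalenceAlgebra}, pinning down the form of an arbitrary $\mathcal T\in G^\sim$ up to the announced free parameters, and then closing the computation by a direct substitution into equation~\eqref{eq:IbragimovClassA}. I take $\mathcal T$ to be a point transformation in the space of $(t,x,u_{(2)},f,g)$ that is projectable to $(t,x,u)$, writing its $(t,x,u)$-components as $\tilde t=T(t,x,u)$, $\tilde x=X(t,x,u)$, $\tilde u=U(t,x,u)$; the prolongations to $u_x$, $u_{xx}$ are determined by contact, and the transformed arbitrary elements $\tilde f$, $\tilde g$ are to be found.

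In the first stage I push forward a convenient spanning subset of $\mathfrak g^\sim$ and force each image into the smallest megaideal containing it. From $\mathrm Z_{\mathfrak g'}=\langle\GG(1)\rangle$, the condition $\mathcal T_*\p_u=c_2\p_{\tilde u}$ with a scalar $c_2\ne0$ yields $T_u=X_u=0$ and $U_u=c_2$. From the megaideal $\mathfrak s=\langle\PP^t,\GG(1),\FF^1\rangle$ obtained at the end of Section~\ref{sec:MegaidealsOfEquivalenceAlgebra}, the condition $\mathcal T_*\PP^t\in\mathfrak s$ reads $\mathcal T_*\p_t=c_1\p_{\tilde t}+(\beta+\gamma\tilde t)\p_{\tilde u}$ with scalars $c_1\ne0,\beta,\gamma$, and equating components gives $T_t=c_1$, $X_t=0$, $U_t=\beta+\gamma T$. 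Finally, $\mathfrak g'''=\langle\DDD(\varphi),\GG(\psi)\rangle$ has no $\p_{\tilde t}$ component, and since $X$ is already a function of $x$ alone, the $\p_{\tilde t}$ component of $\mathcal T_*\DDD(\varphi)$ equals $\varphi(x)\,T_x$; arbitrariness of $\varphi$ forces $T_x=0$. Assembling and integrating produces $T=c_1t+c_0$, $X=\varphi(x)$ with $\varphi_x\ne0$, and
\begin{equation*}
U=c_2u+c_4t^2+c_3t+\psi(x),\qquad \tilde u_{\tilde x}=\frac{c_2u_x+\psi_x}{\varphi_x},
\end{equation*}
where $c_4=\gamma c_1/2$, $c_3=\beta+\gamma c_0$, and the formula for $\tilde u_{\tilde x}$ is the standard prolongation.

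In the second stage I invoke the direct method to determine $\tilde f$ and $\tilde g$. Total differentiation gives $\tilde u_{\tilde t\tilde t}=(c_2u_{tt}+2c_4)/c_1^2$ and an analogous rational expression for $\tilde u_{\tilde x\tilde x}$; substituting these into $\tilde u_{\tilde t\tilde t}=\tilde f\tilde u_{\tilde x\tilde x}+\tilde g$, eliminating $u_{tt}$ via~\eqref{eq:IbragimovClassA}, and treating $u_{xx}$ as a free jet variable, the coefficient of $u_{xx}$ yields $\tilde f=\varphi_x^2f/c_1^2$, and the remainder gives $\tilde g$ in the announced form. One then checks that $\tilde f$, $\tilde g$ depend only on $(\tilde x,\tilde u_{\tilde x})$ (using $u_x=(\varphi_x\tilde u_{\tilde x}-\psi_x)/c_2$), so the transformation belongs to $G^\sim$; the converse, that every transformation of the stated form is admissible, follows by direct substitution.

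The main obstacle is the bookkeeping in the first stage: each push-forward must be matched to the smallest available megaideal so that no spurious constraint kills the free parameters $c_0,\dots,c_4,\varphi,\psi$, and one must confirm that the auxiliary condition~\eqref{eq:IbragimovClassB} is preserved (which is immediate from the explicit form of $\tilde f$, $\tilde g$). The direct-method step is routine because the equation is polynomial of degree one in $u_{xx}$, so a single separation suffices to split off $\tilde f$ from $\tilde g$.
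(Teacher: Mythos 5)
Your proposal is correct, and its first stage coincides with the paper's: you restrict $T$, $X$, $U$ by forcing the push-forwards of $\GG(1)$, $\PP^t$ and $\DDD(\hat\varphi)$ into the minimal megaideals $\langle\GG(1)\rangle$, $\langle\PP^t,\GG(1),\FF^1\rangle$ and $\langle\DDD(\varphi),\GG(\psi)\rangle$, which is a leaner but equivalent selection than the paper's full list of conditions~\eqref{eq:PushForwardOfT} (the paper extracts $T=c_1t+c_0$ from $\mathcal T_*\FF^1=a_{12}\tilde\GG(1)+a_{22}\tilde\FF^1$, i.e.\ $tU_u=a_{22}T+a_{12}$, whereas you get $T_t=\const$ from $\mathcal T_*\PP^t$ and $T_x=0$ from the absence of a $\p_{\tilde t}$ component in $\mathfrak g'''$). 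The genuine divergence is in the second stage: you switch to the direct method, substituting the prolonged transformation into~\eqref{eq:IbragimovClassA} and splitting with respect to $u_{xx}$ to read off $\tilde f$ and $\tilde g$. The paper explicitly flags this as a legitimate exit point but deliberately continues within the algebraic framework, deriving $F_g=0$, $G_g=c_2/c_1^2$, $fF_f=F$, $F_{u_x}=0$, etc.\ from the remaining conditions $\mathcal T_*\FF^2$, $\mathcal T_*\DDD^t$, $\mathcal T_*\GG(\hat\psi)$ and $\mathcal T_*\DDD(\hat\varphi)$, precisely to demonstrate that for this class \emph{all} determining equations follow from the automorphism constraints---an atypical situation it wants to showcase. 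Your hybrid route is shorter and entirely standard (a single separation in the jet variable $u_{xx}$ does the splitting, since the equation is affine in $u_{xx}$), and it correctly closes with the two checks the paper also needs: that $\tilde f,\tilde g$ depend only on $(\tilde x,\tilde u_{\tilde x})$ via $u_x=(\varphi_x\tilde u_{\tilde x}-\psi_x)/c_2$, and that the nonvanishing condition~\eqref{eq:IbragimovClassB} is preserved. What the paper's longer route buys is a methodological point about the strength of the megaideal constraints for this example, not additional rigor.
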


\begin{proof}
The group~$G^\sim$ consists of nondegenerate point transformations
in the joint space of variables~$t$, $x$ and $u$,
the first derivatives~$u_t$ and~$u_x$ and the arbitrary elements~$f$ and~$g$,
which are projectable to the variable space and
whose components for first derivatives are defined via
the first prolongation of their projections to the variable space.
Thus, the general form of a transformation~$\mathcal T$ from~$G^\sim$ is
\begin{gather*}
\tilde t=T(t,x,u),\quad \tilde x=X(t,x,u),\quad \tilde u=U(t,x,u),\\
\tilde u_{\tilde t}=U^t(t,x,u,u_t,u_x),\quad \tilde u_{\tilde x}=U^x(t,x,u,u_t,u_x),\\
\tilde f=F(t,x,u,u_t,u_x,f,g),\quad \tilde g=G(t,x,u,u_t,u_x,f,g),
\end{gather*}
where~$U^t$ and~$U^x$ are determined via~$T$, $X$ and~$U$ and
the nondegeneracy condition should be additionally satisfied.
To obtain the constrained form of $\mathcal T$,
we will act by the push-forward~$\mathcal T_*$ induced by~$\mathcal T$
on the vector fields~\eqref{eq:EquivalenceAlgebraGenWaveEqs} additionally including the terms with~$\p_{u_t}$
and use megaideals of the equivalence algebra~$\mathfrak g^\sim$ of the class~\eqref{eq:IbragimovClass}
and restrictions on automorphisms of~$\mathfrak g^\sim$ found in Section~\ref{sec:MegaidealsOfEquivalenceAlgebra}.
The principal part of the consideration is based on objects and properties of~$\mathfrak g^\sim$
that are related to the finite-dimensional megaideal
$\mathfrak m=\mathrm C_{\mathfrak g}(\mathfrak g''')=\langle\DDD^t,\PP^t,\GG(1),\FF^1,\FF^2\rangle$.
Recall that megaideals being sums of other megaideals are not essential for the computation
since they give weaker constraints than their summands.
For example, the megaideal~$\mathfrak g''$ is the sum of~$\mathfrak g'''$ and~$\mathrm Z_{\mathfrak g'}$
and hence we do not use it in the further consideration.
A list of essential megaideals is in fact exhausted by the spans
\begin{gather*}
\langle\GG(1)\rangle,\quad
\langle\GG(1),\FF^1\rangle,\quad
\langle\GG(1),\FF^1,\FF^2\rangle,\quad
\langle\PP^t,\GG(1),\FF^1\rangle,\quad
\langle\DDD^t,\PP^t,\GG(1),\FF^1,\FF^2\rangle,\\
\langle\GG(\psi)\rangle,\quad
\langle\DDD(\varphi),\GG(\psi)\rangle.
\end{gather*}
We apply the condition of preserving megaideals following their ordering by inclusion.
In other words, for each vector field~$Q$ from~\eqref{eq:EquivalenceAlgebraGenWaveEqs}
we take the megaideal that contains~$Q$ and is minimal among the above listed megaideals.
As a result, we derive an optimized set of constraints for $\mathcal T_*$ as automorphism of~$\mathfrak g$,
\begin{subequations}\label{eq:PushForwardOfT}
\begin{align}
\label{eq:PushForwardOfT1}
&\mathcal T_*\GG(1)=a_{11}\tilde\GG(1),
\\\label{eq:PushForwardOfT2}
&\mathcal T_*\FF^1=a_{12}\tilde\GG(1)+a_{22}\tilde\FF^1,
\\\label{eq:PushForwardOfT3}
&\mathcal T_*\FF^2=a_{13}\tilde\GG(1)+a_{23}\tilde\FF^1+a_{33}\tilde\FF^2,
\\\label{eq:PushForwardOfT4}
&\mathcal T_*\PP^t=a_{14}\tilde\GG(1)+a_{24}\tilde\FF^1+a_{44}\tilde\PP^t,
\\\label{eq:PushForwardOfT5}
&\mathcal T_*\DDD^t=a_{15}\tilde\GG(1)+a_{25}\tilde\FF^1+a_{35}\tilde\FF^2+a_{45}\tilde\PP^t+\tilde\DDD^t,
\\\label{eq:PushForwardOfT6}
&\mathcal T_*\GG(\hat\psi)=\tilde\GG(\tilde\psi^{\hat\psi}),
\\\label{eq:PushForwardOfT7}
&\mathcal T_*\DDD(\hat\varphi)=\tilde\GG(\tilde\psi^{\hat\varphi})+\tilde\DDD(\tilde\varphi^{\hat\varphi}),
\end{align}
\end{subequations}
where $a$'s are constants, $a_{11}a_{22}a_{33}a_{44}\ne0$ and
$\hat\psi$ and $\hat\varphi$ are arbitrary smooth functions of~$x$.
The constants~$a$'s completed with $a_{55}=1$ and $a_{ij}=0$, $1\leqslant i<j\leqslant5$,
form a matrix of an automorphism of the megaideal~$\mathfrak m$.
Tildes over vector fields on the right hand sides of the above equations mean
that these vector fields are written in terms of the transformed variables.
The parameter-functions \smash{$\tilde\psi^{\hat\psi}$}, $\tilde\psi^{\hat\varphi}$ and $\tilde\varphi^{\hat\varphi}$
are smooth functions of~$\tilde x$ associated with the parameter-functions~\smash{$\hat\psi$} or $\hat\varphi$,
which is indicated by the corresponding superscripts.
We will derive constraints for~$\mathcal T$,
consequently equating the corresponding vector-field components of right- and left-hand sides in each of the conditions~\eqref{eq:PushForwardOfT}
and taking into account constraints obtained in previous steps.
As the components of vector fields and of the transformation~$\mathcal T$
associated with the derivatives $u_t$ and~$u_x$ are defined via first-order prolongation
involving the similar values related to the variables~$t$, $x$ and~$u$,
the coefficients of~$\p_{u_t}$ and~$\p_{u_x}$ give no essentially new equations
in comparison with the coefficients of~$\p_t$, $\p_x$ and~$\p_u$.
This is why we will not equate the coefficients of~$\p_{u_t}$ and~$\p_{u_x}$.
To have a proper representation of the final result, we will re-denote certain values in an appropriate way.

Thus, the equation~\eqref{eq:PushForwardOfT1} implies that $T_u=X_u=0$, $U_u=c_2$ and $F_u=G_u=0$,
where the nonvanishing constant~$a_{11}$ is re-denoted by~$c_2$.
Then we derive from Eq.~\eqref{eq:PushForwardOfT2} that
$tU_u=a_{22}T+a_{12}$, i.e., $T=c_1t+c_0$ where $c_1=c_2/a_{22}\ne0$ and $c_0=-a_{12}/a_{22}$, and $F_{u_t}=G_{u_t}=0$.
The consequence $t^2U_u=a_{33}T^2+a_{23}T+a_{13}$ of Eq.~\eqref{eq:PushForwardOfT3} gives only relations between $a$'s.
In particular, $a_{33}=c_2/c_1^2$.
Then the other consequences of Eq.~\eqref{eq:PushForwardOfT3} are $F_g=0$ and~$G_g=c_2/c_1^2$.
The essential consequences of Eq.~\eqref{eq:PushForwardOfT4} are exhausted by $X_t=0$, $U_t=a_{24}T+a_{14}$ and $F_t=G_t=0$.
Therefore, $X=\varphi(x)$ and $U=c_2+c_4t^2+c_3t+\psi(x)$, where $\varphi_x\ne0$, $c_4=a_{24}c_1/2$ and $c_3=a_{14}+a_{24}c_0$.

As we have already derived the precise expressions for the components of~$\mathcal T$ corresponding to the variables
(cf.\ Eq.~\eqref{eq:EquivalenceGroupIbragimovClass}),
at this point we could terminate the computation of equivalence transformations by the algebraic method
and calculate the expressions for~$F$ and~$G$ by the direct method.
At the same time, all the determining equations for transformations from the equivalence group~$G^\sim$
of the class~\eqref{eq:IbragimovClass} follow from restrictions for automorphisms of the equivalence algebra~$\mathfrak g^\sim$.
This is not a common situation when the algebraic method is applied.
Usually it gives only a part of the determining equations simplifying the subsequent application of the direct method.
See, e.g., the computations of the complete point symmetry groups of
the barotropic vorticity equation and quasi-geostrophic two-layer model
in \cite[Section~3]{bihl11Cy} and \mbox{\cite[Section~4]{bihl11By}}, respectively.
This is why we complete the consideration of the equivalence group~$G^\sim$ within the framework of the algebraic method.

From Eq.~\eqref{eq:PushForwardOfT5} we obtain in particular that $tU_t=a_{35}T^2+a_{25}T+a_{15}$, $fF_f=F$ and $fG_f+gG_g=G-a_{35}$,
where $a_{35}=2c_4/c_1^2$ in view of the first of these consequences.

The equation~\eqref{eq:PushForwardOfT6} implies the equations
\begin{equation}\label{eq:ConditionsForArbitraryElementsByAlgabraicMethod1}
\hat\psi U_u=\tilde\psi^{\hat\psi},\quad
\hat\psi_xF_{u_x}=0,\quad
\hat\psi_xG_{u_x}-\hat\psi_{xx}fG_g=\tilde\psi^{\hat\psi}_{\tilde x\tilde x}F.
\end{equation}
The first and second equations of~\eqref{eq:ConditionsForArbitraryElementsByAlgabraicMethod1} are equivalent to
$\tilde\psi^{\hat\psi}=c_2\hat\psi$ and $F_{u_x}=0$.
Then we can express the derivative $\tilde\psi^{\hat\psi}_{\tilde x\tilde x}$ via derivatives of~$\hat\psi$,
$\tilde\psi^{\hat\psi}_{\tilde x\tilde x}=c_2\varphi_x^{-3}(\varphi_x\hat\psi_{xx}-\varphi_{xx}\hat\psi_x)$,
substitute the expression into the third equation of~\eqref{eq:ConditionsForArbitraryElementsByAlgabraicMethod1}
and split with respect to the derivatives $\hat\psi_x$ and~$\hat\psi_{xx}$,
as the function~$\hat\psi$ is arbitrary.
As a result, we obtain $F=c_1^{-2}\varphi_x{}^2f$ and
$G_{u_x}=c_2\varphi_x^{-3}\varphi_{xx}F$, i.e., $G_{u_x}=c_2c_1^{-2}\varphi_x^{-1}\varphi_{xx}f$.
The expression for~$F$ coincides with the transformation component for~$f$ presented in the theorem.

The last essential equation for~$G$ is given by Eq.~\eqref{eq:PushForwardOfT7}.
Collecting coefficients of $\p_x$, $\p_u$ and $\p_g$ in Eq.~\eqref{eq:PushForwardOfT7}, we have that
$\tilde\varphi^{\hat\varphi}(\tilde x)=\varphi_x\hat\varphi$, $\tilde\psi^{\hat\varphi}(\tilde x)=\psi_x\hat\varphi$ and
\begin{equation}\label{eq:ConditionsForArbitraryElementsByAlgabraicMethod2}
\hat\varphi G_x-\hat\varphi_xu_xG_{u_x}+2\hat\varphi_xfG_f+\hat\varphi_{xx}u_xfG_g=
\tilde\varphi^{\hat\varphi}_{\tilde x\tilde x}\tilde u_{\tilde x}F-\tilde\psi^{\hat\varphi}_{\tilde x\tilde x}F,
\end{equation}
respectively.
We proceed in a way analogous to the previous step.
Namely, we express the derivatives $\tilde\varphi^{\hat\varphi}_{\tilde x\tilde x}$
and~$\tilde\psi^{\hat\varphi}_{\tilde x\tilde x}$ via derivatives of~$\hat\varphi$,
substitute the expressions into Eq.~\eqref{eq:ConditionsForArbitraryElementsByAlgabraicMethod2} and
split with respect to derivatives of~$\hat\varphi$ because the function~$\hat\psi$ is arbitrary.
Equating the coefficients of~$\hat\varphi_x$ leads to the equation
\[fG_f=c_1^{-2}\varphi_x^{-1}(c_2u_x+\psi_x\varphi_{xx}-\psi_{xx}\varphi_x).\]

The simultaneous integration of all the equations obtained for~$G$ precisely results in the transformation component for~$g$ from the theorem,
which completes step~4 of the megaideal-based method.

In order to complete the proof, we should realize the last procedure step, which is in fact trivial for the class~\eqref{eq:IbragimovClass}.
We should just check by the direct computation of expressions for transformed derivatives
that any transformation of the form~\eqref{eq:EquivalenceGroupIbragimovClass} maps any equation from the class~\eqref{eq:IbragimovClass}
to an equation from the same class.
\end{proof}

After comparing the equivalence algebra~$\mathfrak g^\sim$ and the equivalence group~$G^\sim$, the following corollary is evident:

\begin{corollary}
A complete list of discrete equivalence transformations of the class~\eqref{eq:IbragimovClass}
that are independent up to combining with each other and with continuous equivalence transformations of this class
is exhausted by the transformations
\begin{align*}
&(t,x,u,u_x,f,g)\mapsto(-t,x,u,u_x,f,g),\\
&(t,x,u,u_x,f,g)\mapsto(t,-x,u,-u_x,f,g),\\
&(t,x,u,u_x,f,g)\mapsto(t,x,-u,-u_x,f,-g).
\end{align*}

\end{corollary}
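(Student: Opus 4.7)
The plan is to read off the corollary directly from the description of $G^\sim$ given in Theorem~\ref{thm:EquivalenceGroupIbragrimovClass}. A generic element of~$G^\sim$ is parameterized by the constants $c_0,\dots,c_4$ (with $c_1c_2\neq 0$) and the functions $\varphi,\psi\in C^\infty(\mathbb{R})$ (with $\varphi_x\neq 0$). I would begin by identifying the connected component~$U^\sim$ of the identity: it consists of those transformations in~\eqref{eq:EquivalenceGroupIbragimovClass} for which $c_1>0$, $c_2>0$ and $\varphi_x>0$. The parameters $c_0,c_3,c_4$ range over all of~$\mathbb{R}$ and $\psi$ over all of~$C^\infty(\mathbb{R})$, which are contractible; $c_1$ and $c_2$ are constrained only by the nondegeneracy $c_1c_2\neq 0$, so each has two connected components distinguished by sign; and $\varphi$ ranges over diffeomorphisms of~$\mathbb{R}$, which likewise has two components distinguished by $\mathop{\rm sign}\varphi_x$. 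Hence $G^\sim$ has exactly eight connected components and
\[
G^\sim/U^\sim\;\cong\;(\mathbb{Z}/2)^3,
\]
with the three $\mathbb{Z}/2$ factors labelled by $\mathop{\rm sign} c_1$, $\mathop{\rm sign} c_2$ and $\mathop{\rm sign}\varphi_x$.

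Next I would exhibit one representative of each nontrivial generator. Setting $(c_0,c_1,c_2,c_3,c_4)=(0,-1,1,0,0)$, $\varphi(x)=x$, $\psi(x)=0$ in~\eqref{eq:EquivalenceGroupIbragimovClass} yields the first listed transformation $(t,x,u,u_x,f,g)\mapsto(-t,x,u,u_x,f,g)$. Choosing instead all $c_i$ trivial ($c_1=c_2=1$, others zero), $\varphi(x)=-x$ and $\psi=0$ gives $(t,x,u,u_x,f,g)\mapsto(t,-x,u,-u_x,f,g)$, using that $\varphi_{xx}=0$ so the extra $f$-terms in~$\tilde g$ vanish. Finally $(c_1,c_2)=(1,-1)$ with all other parameters trivial gives $(t,x,u,u_x,f,g)\mapsto(t,x,-u,-u_x,f,-g)$. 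Substitution in the explicit formulas of Theorem~\ref{thm:EquivalenceGroupIbragrimovClass} makes each of these verifications a one-line check.

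To finish, I would note that the three chosen representatives independently flip the three generators of the factor group $(\mathbb{Z}/2)^3$, so their products (together with compositions with elements of $U^\sim$) exhaust all cosets of~$U^\sim$ in~$G^\sim$. This yields the claimed completeness and independence of the list. The only genuinely nontrivial ingredient in the argument is the assertion that the pseudogroup of diffeomorphisms of the line has exactly two connected components, detected by the sign of the derivative, which is standard; all other parameter spaces are manifestly connected or have two components for obvious reasons. That is why the excerpt describes the corollary as ``evident'' once Theorem~\ref{thm:EquivalenceGroupIbragrimovClass} is in hand.
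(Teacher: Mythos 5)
Your proposal is correct and fills in exactly the reasoning the paper leaves implicit: the paper offers no written proof, declaring the corollary ``evident'' from the explicit parameterization of $G^\sim$ in Theorem~\ref{thm:EquivalenceGroupIbragrimovClass}, and your identification of the eight connected components via $\mathop{\rm sign}c_1$, $\mathop{\rm sign}c_2$, $\mathop{\rm sign}\varphi_x$ together with the three explicit coset representatives is precisely that argument made explicit (and is consistent with the composition $\mathcal T=\mathscr D^t(c_1)\mathscr P^t(c_0)\mathscr D(\varphi)\mathscr D^u(c_2)\mathscr F^1(c_4)\mathscr F^2(c_3)\mathscr G(\psi)$ given after the corollary, which shows the positive-parameter transformations are generated by flows of~$\mathfrak g^\sim$).
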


Theorem~\ref{thm:EquivalenceGroupIbragrimovClass} implies that any transformation~$\mathcal T$ from~$G^\sim$ of the class~\eqref{eq:IbragimovClass} can be represented as the composition
\[
  \mathcal T = \mathscr D^t(c_1)\mathscr P^t(c_0)\mathscr D(\varphi)\mathscr D^u(c_2)\mathscr F^1(c_4)\mathscr F^2(c_3)\mathscr G(\psi),
\]
where the above parameterized equivalence transformations are
\[
\arraycolsep=0ex
\begin{array}{lllllll}
\mathscr P^t(c_0)  \colon\ & \tilde t=t+c_0,\ & \tilde x=x,      \quad& \tilde u=u,       \ & \tilde u_{\tilde x}=u_x,            \ & \tilde f=f,           \ & \tilde g=g,\\
\mathscr D^t(c_1)  \colon\ & \tilde t=c_1t, \ & \tilde x=x,      \quad& \tilde u=u,       \ & \tilde u_{\tilde x}=u_x,            \ & \tilde f=c_1^{-2}f,   \ & \tilde g=c_1^{-2}g,\\
\mathscr D(\varphi)\colon\ & \tilde t=t,    \ & \tilde x=\varphi,\quad& \tilde u=u,       \ & \tilde u_{\tilde x} = u_x/\varphi_x,\ & \tilde f=\varphi^2_xf,\ & \tilde g=g+\varphi_{xx}u_xf/\varphi_x,\\
\mathscr D^u(c_2)  \colon\ & \tilde t=t,    \ & \tilde x=x,      \quad& \tilde u=c_2u,    \ & \tilde u_{\tilde x}=c_2u_x,         \ & \tilde f=f,           \ & \tilde g=c_2g,\\
\mathscr F^1(c_3)  \colon\ & \tilde t=t,    \ & \tilde x=x,      \quad& \tilde u=u+c_3t,  \ & \tilde u_{\tilde x}=u_x,            \ & \tilde f=f,           \ & \tilde g=g,\\
\mathscr F^2(c_4)  \colon\ & \tilde t=t,    \ & \tilde x=x,      \quad& \tilde u=u+c_4t^2,\ & \tilde u_{\tilde x}=u_x,            \ & \tilde f=f,           \ & \tilde g=g+2c_4,\\
\mathscr G(\psi)   \colon\ & \tilde t=t,    \ & \tilde x=x,      \quad& \tilde u=u+\psi,  \ & \tilde u_{\tilde x}=u_x+\psi_x,     \ & \tilde f=f,           \ & \tilde g=g-\psi_{xx}f,
\end{array}
\]
and the nondegeneracy requires that $c_1c_2\varphi_x\ne0$. These transformations are shifts and scalings in $t$, arbitrary transformations in $x$, scalings of $u$, gauging transformations of $u$ with square polynomials in $t$ and arbitrary functions of $x$.

\section{Conclusion}\label{sec:ConclusionAlgebraicMethod}

In this paper we have extended the algebraic method for the computation of complete point symmetry groups of single systems of differential equations to the computation of equivalence groups of classes of such systems. In general, the equivalence group may include both discrete and continuous equivalence transformations. Unlike the infinitesimal method, which aims at finding the corresponding equivalence algebras from which the continuous parts of equivalence groups are computed, the algebraic method allows also constructing discrete equivalence transformations.
We have developed two versions of this method, automorphism-based and megaideal-based, which may be combined depending on the problem under consideration~\cite{card12Ay}.
The effectiveness of both versions rests on the fact that each equivalence transformation induces, via push-forwarding the vector fields that constitute the corresponding equivalence algebra~$\mathfrak g^\sim$, an automorphism of the algebra~$\mathfrak g^\sim$.
This imposes substantial restrictions on the functional form of equivalence transformations
and strongly simplifies, or even makes trivial, the further application of the direct method.

Each of the versions has own advantages and disadvantages. We briefly recall some of them.

The automorphism-based version ensures the maximal use of properties of equivalence algebras within the framework of the algebraic method.
One more benefit of this version is that most continuous equivalence transformations, which are in fact known as easily obtainable from elements of the corresponding equivalence algebra~$\mathfrak g^\sim$, can be factored out in the course of the computation.
At the same time, the main ingredient of the automorphism-based version is finding the entire automorphism group of the algebra~$\mathfrak g^\sim$,
which places a strong limitation on the dimension and/or structure of~$\mathfrak g^\sim$.
This is why usually the automorphism-based version is applied to classes of differential equations whose equivalence algebras are low-dimensional.

The megaideal-based version rests on the invariance of megaideals of equivalence algebras under the associated automorphisms
and hence it does not require the explicit computation of the automorphism groups of equivalence algebras,
which makes it suitable for finding the equivalence groups of classes with infinite-dimensional equivalence algebras.
A disadvantage of the megaideal-based version is
that the equivalence algebra~$\mathfrak g^\sim$ of a class of differential equations~$\mathcal L|_{\mathcal S}$ may admit properties
that induce essential constraints for equivalence transformations of~$\mathcal L|_{\mathcal S}$ but cannot be interpreted in terms of megaideals of~$\mathfrak g^\sim$.

Sometimes disadvantages of either of the versions of the algebraic method can be reduced by combining these versions.

We have restricted ourselves to the computation of the equivalence group of a single physically relevant example, a class of nonlinear wave equations arising in nonlinear elasticity. The equivalence algebra of this class is infinite dimensional, which thus gives a prototypical example for the effectiveness of the megaideal-based method. We plan to write a more extensive paper with further examples on the use of both versions of the algebraic method in the future.

Important further theoretical developments of the present method may include an extension to the computation of extended equivalence groups, which also play a role in the theory of group classification. The study of equivalence groupoids of classes of differential equations using the algebraic method will be another central milestone.

\subsection*{Acknowledgements}

AB is a recipient of an APART Fellowship of the Austrian Academy of Sciences. This research was supported by the Austrian Science Fund (FWF), project P25064 (EDSCB and ROP). The authors are grateful to Vyacheslav Boyko for productive discussions and useful suggestions. ROP is also grateful for the hospitality and financial support provided by the University of Cyprus.


\end{document}